%
\documentclass[runningheads]{llncs}
\usepackage{graphicx}
\usepackage{amsmath}
\usepackage[capitalise]{cleveref}%
\usepackage{float}
\usepackage{amsfonts}
\usepackage{tikz}
\usepackage{mathrsfs}
\usepackage{amssymb}
\usepackage{adjustbox}
\usepackage{enumitem}
\usepackage{flafter}
\usepackage{mathtools}
\usetikzlibrary{arrows,decorations.pathreplacing,backgrounds,calc,positioning}
\allowdisplaybreaks
\crefname{case}{Case}{Cases}
\usepackage[ruled,vlined,linesnumbered]{algorithm2e}
\usepackage{comment}
\newcommand{\convexpath}[2]{
	[   
	create hullnodes/.code={
		\global\edef\namelist{#1}
		\foreach [count=\counter] \nodename in \namelist {
			\global\edef\numberofnodes{\counter}
			\node at (\nodename) [draw=none,name=hullnode\counter] {};
		}
		\node at (hullnode\numberofnodes) [name=hullnode0,draw=none] {};
		\pgfmathtruncatemacro\lastnumber{\numberofnodes+1}
		\node at (hullnode1) [name=hullnode\lastnumber,draw=none] {};
	},
	create hullnodes
	]
	($(hullnode1)!#2!-90:(hullnode0)$)
	\foreach [
	evaluate=\currentnode as \previousnode using \currentnode-1,
	evaluate=\currentnode as \nextnode using \currentnode+1
	] \currentnode in {1,...,\numberofnodes} {
		let
		\p1 = ($(hullnode\currentnode)!#2!-90:(hullnode\previousnode)$),
		\p2 = ($(hullnode\currentnode)!#2!90:(hullnode\nextnode)$),
		\p3 = ($(\p1) - (hullnode\currentnode)$),
		\n1 = {atan2(\y3,\x3)},
		\p4 = ($(\p2) - (hullnode\currentnode)$),
		\n2 = {atan2(\y4,\x4)},
		\n{delta} = {-Mod(\n1-\n2,360)}
		in 
		{-- (\p1) arc[start angle=\n1, delta angle=\n{delta}, radius=#2] -- (\p2)}
	}
	-- cycle
}

%

\begin{document}
\title{Characterization of the Imbalance Problem on Complete Bipartite Graphs
}
%
%
\author{Steven Ge
	\and
	Toshiya Itoh
}
%
%
\institute{Tokyo Institute of Technology, Meguro, Japan\\
\email{ge.s.aa@m.titech.ac.jp}\\
\email{titoh@c.titech.ac.jp}}
\maketitle              
\begin{abstract}
	We study the imbalance problem on complete bipartite graphs. The imbalance problem is a graph layout problem and is known to be NP-complete.
	Graph layout problems find their applications in the optimization of  networks for parallel computer architectures, VLSI circuit design,  information retrieval, numerical analysis, computational biology, graph theory, scheduling and archaeology \cite{diaz2002survey}.
	In this paper, we give characterizations for the optimal solutions of the imbalance problem on complete bipartite graphs. Using the characterizations, we can solve the imbalance problem in time polylogarithmic in the number of vertices, when given the cardinalities of the parts of the graph, and verify whether a given solution is optimal in time linear in the number of vertices on complete bipartite graphs. We also introduce a generalized form of complete bipartite graphs on which the imbalance problem is solvable in time quasilinear in the number of vertices by using the aforementioned characterizations.
	
	
	\keywords{Imbalance Problem  \and Vertex layout \and Complete bipartite graph \and Proper interval bipartite graph.}
\end{abstract}
\section{Introduction}  
Graph layout problems are combinatorial optimization problems, where the goal is to find an ordering on the vertices that optimizes an objective function. A large number of problems from different domains can be formulated as graph layout problems \cite{diaz2002survey}. The imbalance problem is a graph layout problem that has applications in 3-dimensional circuit design \cite{wood}.

The imbalance problem was introduced by Biedl et al. \cite{BIEDL200527}. Given an ordering of the vertices of a graph $G$, the imbalance of a vertex $v$ is the absolute difference in the number of neighbors to the left of $v$ and the number of neighbors to the right of $v$. The imbalance of an ordering is the sum of the imbalances of the vertices. An instance of the imbalance problem consists of a graph $G$ and an integer $k$. The problem asks whether there exists an ordering on the vertices of $G$ such that the imbalance of the ordering is at most $k$.

The imbalance problem is NP-complete for several graph classes, including bipartite graphs with degree at most 6, weighted trees \cite{BIEDL200527}, general graphs with degree at most 4 \cite{10.1007/11533719_86}, and split graphs \cite{10.1007/978-3-030-26176-4_18}. 
The problem becomes polynomial time solvable on superfragile graphs \cite{10.1007/978-3-030-26176-4_18}. The problem is linear time solvable on proper interval graphs \cite{10.1007/978-3-030-26176-4_18}, bipartite permutation graphs, and threshold graphs \cite{10.1007/978-3-030-64843-5_52}.



Gorzny showed that the minimum imbalance of a bipartite permutation graph $G = (V,E)$, which class is a superclass of complete bipartite graphs and proper interval bipartite graphs, can be computed in $\mathcal{O}(|V| + |E|)$ time \cite{10.1007/978-3-030-64843-5_52}. 
We give characterizations for the optimal solutions of the imbalance problem on complete bipartite graphs. Using the characterizations, we show that the imbalance problem is solvable in $\mathcal{O}(\log(|V|) \cdot \log(\log(|V|)))$ time on complete bipartite graphs, when given the cardinalities of the parts of the graph. Additionally, using the characterizations, we can verify whether a given solution is optimal in $O(|V|)$ on complete bipartite graphs. 
We also introduce a generalized form of complete bipartite graphs, which we call chained complete bipartite graphs, on which the imbalance problem is solvable in $\mathcal{O}(c \cdot \log(|V|) \cdot \log(\log(|V|)))$ time, where $c = \mathcal{O}(|V|)$, by using the aforementioned characterizations. As chained complete bipartite graphs are a subclass of proper interval bipartite graphs, the result of Gorzny also applies to chained complete bipartite graphs.
\section{Preliminaries}\label{c2}


We only consider graphs that are finite, undirected, connected, and simple (i.e. without multiple edges or loops). A graph $G$ is denoted as $G = (V, E)$, where $V$ denotes the set of vertices and $E \subseteq V \times V$ denotes the set of undirected edges. For convenience, we abbreviate bipartite graph as bigraph.




For $n \in \mathbb{N}^+$, let us define 
$[n] = \{1, 2, \dots, n\}.$


We define $\sigma_{S}: S \rightarrow [|S|]$ to be an ordering of $S$. For convenience, at times we denote an ordering $\sigma_S$ by 
$(\sigma^{-1}(1), \sigma^{-1}(2), \dots, \sigma^{-1}(|S|)),$ 
where, $v = \sigma^{-1}(\sigma(v))$  for $v \in S$.
We also say that $v \in S$ is at position $k$ in ordering $\sigma_S$ if $\sigma_S(v) = k$.

Let $S_1, S_2, \dots, S_n$ be a collection of disjoint sets. Let $s \in S_i$, where $1\leq i \leq n$, then the concatenation of orderings is defined as follows: 
$\sigma_{S_1}\sigma_{S_2}\dots\sigma_{S_n}(s) = \sigma_{S_i}(s) + \sum\limits_{j=1}^{i-1}|S_j|.$
Additionally, we use the product notation to denote the concatenation over a set. That is,
$\prod\limits_{i=1}^{n}\sigma_{S_i} = \sigma_{S_1}\sigma_{S_2}\dots\sigma_{S_n}.$

Given an ordering $\sigma_S$ we say that $v \in S$ occurs to the left of $u \in S$ in $\sigma_S$, if and only if $\sigma_S(v) < \sigma_S(u)$. We denote this as
$v <_{\sigma_S} u.$
We define $>_{\sigma_S}$ analogously.

Given an ordering $\sigma_S$ we say that $\sigma_S^{S'}$, where $S' \subseteq S$, is a subordering of $\sigma_S$ on $S'$ if $\sigma_S^{S'}$ preserves the relative ordering of the elements in $S'$. That is $\sigma_S^{S'}$ is a subordering of $\sigma_S$ if and only if
$\forall_{u,v \in S'}\forall_{\oplus \in \{<,>\}}~u \oplus_{\sigma_S^{S'}} v \iff u \oplus_{\sigma_S} v.$

For a graph $G=(V,E)$, the open neighborhood of a vertex $v \in V$, denoted by $N(v)$, is the set of vertices adjacent to $v$. We call the vertices in $N(v)$ the neighbors of vertex $v$. That is
$N(v) = \{u \in V ~|~ \{u,v\} \in E\}.$

The imbalance of a vertex $v \in V$ on the ordering $\sigma_{V}$ in graph $G=(V,E)$, denoted by $I(v, \sigma_{V}, G)$, is defined to be the absolute difference in the number of neighbors of $v$ occurring to the left of $v$ and the number of neighbors of $v$ occurring to the right of $v$ in $\sigma_V$. That is,
$I(v, \sigma_{V}, G) = \big||\{u \in N(v) ~|~ u <_{\sigma_V} v\}| - |\{u \in N(v) ~|~ u >_{\sigma_V} v\}|\big|.$

The imbalance of an ordering $\sigma_{V}$ on graph $G=(V,E)$, denoted by $I(\sigma_{V}, G)$, is defined to be sum over the imbalances of the vertices in $V$ on the ordering $\sigma_{V}$ in graph $G$. That is
$I(\sigma_{V}, G) = \sum\limits_{v \in V}I(v, \sigma_{V}, G).$
If the ordering and/or graph are clear from the context, we shall exclude them from the parameters of the function $I$.

The imbalance of a graph $G=(V,E)$, denoted by $I(G)$, is defined to be the minimum imbalance over all orderings $\sigma_{V}$. That is,
$I(G) = \min\limits_{\sigma_V \in S(V)}I(\sigma_{V}, G),$
where $S(V)$ denotes the set of all orderings on $V$. 
We call an ordering $\sigma_V$ whose imbalance is equivalent to $I(G)$ an (imbalance) optimal ordering.
Given a graph $G$ and an integer $k$, the imbalance problem asks whether $I(G) \leq k$ is true or false.
A bigraph $G = (X,Y,E)$ is an interval bigraph, if there exists a set of intervals on the real line, where for each vertex $v \in X \cup Y$ we have exactly one corresponding interval such that the intervals corresponding to vertices $x \in X$ and $y \in Y$ intersect if and only if there exists an edge in $E$ connecting $x$ and $y$. We call such a set of intervals the interval representation $I_G$ of graph $G$. That is, $I_G = \{[l_v, r_v] ~|~ v \in X \cup Y\}$ such that
$\forall_{x \in X}~\forall_{y \in Y}~ \big([l_x, r_x] \cap [l_y, r_y] \neq \emptyset \iff \{x,y\} \in E\big).$


An interval bigraph $G = (X,Y,E)$ is a proper interval bigraph, abbreviated by PI-bigraph, if it has an interval representation $I_G$ such that none of the intervals are properly contained in another. That is, there exists an interval representation $I_G = \{[l_v, r_v] ~|~ v \in X \cup Y\}$ of $G$ such that	
$\forall_{u,v \in X \cup Y}~(u \neq v \iff [l_u,r_u] \nsubseteq [l_v,r_v]).$
\section{Imbalance on Complete bipartite graphs}\label{c3}

Let $G = (X,Y,E)$ be a complete bigraph. We shall prove that the minimum imbalance of $G$ is $|X|\cdot |Y| + (|X| \mod 2) \cdot (|Y| \mod 2)$. 

\begin{lemma}\label{lmacbg1}
	If $G = (X,Y,E)$ is a complete bigraph, then there exists an ordering $\sigma_{X \cup Y}$ such that
	$I(\sigma_{X \cup Y}) = |X|\cdot |Y| + (|X| \mod 2) \cdot (|Y| \mod 2).$
\end{lemma}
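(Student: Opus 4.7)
The plan is to exhibit an explicit ordering $\sigma_{X \cup Y}$ whose imbalance is exactly $|X|\cdot|Y| + (|X| \mod 2)(|Y| \mod 2)$ and verify the count directly. Because $G$ is complete bipartite we have $N(x) = Y$ for every $x \in X$ and $N(y) = X$ for every $y \in Y$, so the imbalance of any vertex depends only on how many vertices of the opposite part lie on each side of it. This reduces the computation to counting block sizes in an ordering assembled by concatenating consecutive blocks drawn entirely from $X$ or entirely from $Y$.

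Write $m = |X|$ and $n = |Y|$ and split on the parity of $m$. If $m$ is even, I would partition $X$ arbitrarily into $X_1, X_2$ with $|X_1| = |X_2| = m/2$ and take $\sigma_{X \cup Y} = \sigma_{X_1}\sigma_{Y}\sigma_{X_2}$ (the orderings on each block being arbitrary). Every vertex in $X_1$ has all $n$ of its $Y$-neighbors to its right, every vertex in $X_2$ has all $n$ to its left, and every $y \in Y$ sees exactly $m/2$ $X$-neighbors on each side, so the total imbalance is $m \cdot n + 0 = mn + (m \mod 2)(n \mod 2)$.

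If $m$ is odd, let $a = (m-1)/2$, partition $X$ as $X_1 \cup \{x^*\} \cup X_2$ with $|X_1| = |X_2| = a$, partition $Y$ into $Y_1 \cup Y_2$ with $|Y_1| = \lfloor n/2 \rfloor$ and $|Y_2| = \lceil n/2 \rceil$, and take $\sigma_{X \cup Y} = \sigma_{X_1}\sigma_{Y_1}\sigma_{\{x^*\}}\sigma_{Y_2}\sigma_{X_2}$. The $m-1$ vertices in $X_1 \cup X_2$ contribute $(m-1)n$ as in the even case; the middle vertex $x^*$ contributes $\lceil n/2 \rceil - \lfloor n/2 \rfloor = n \mod 2$; and each of the $n$ vertices in $Y$ has $a$ $X$-neighbors on one side and $a+1$ on the other, contributing $1$ each. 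The total is $(m-1)n + n + (n \mod 2) = mn + (n \mod 2)$, which equals $mn + (m \mod 2)(n \mod 2)$ since $m$ is odd.

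The argument has no deep obstacle; the only care needed is parity bookkeeping, specifically choosing the placement of $x^*$ and the split of $Y$ so that the odd-parity corrections—one from $x^*$ (contributing $0$ or $1$) and one from each $y \in Y$ in the odd-$m$ case (always contributing $1$)—combine with the bulk $X$-contribution to give exactly $mn + (m \mod 2)(n \mod 2)$. Two cases on the parity of $m$, with the parity of $n$ absorbed through $\lfloor n/2 \rfloor$ and $\lceil n/2 \rceil$, handle all four parity combinations uniformly.
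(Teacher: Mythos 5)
Your proof is correct and takes essentially the same approach as the paper: an explicit ``sandwich'' ordering when one part has even cardinality and a ``pseudo-sandwich'' with a single middle vertex in the remaining case, verified by directly counting each vertex's left and right neighbors. The only cosmetic difference is that you split on the parity of $|X|$ alone rather than on whether at least one part is even, so in the odd-$|X|$, even-$|Y|$ subcase you use the pseudo-sandwich where the paper uses the plain sandwich; both orderings attain the same imbalance $|X|\cdot|Y|$ there.
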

\begin{proof}
	The proof constructs an ordering on $X\cup Y$ and considers two cases, either the cardinality one part is even or none of the cardinalities of the parts are even. 
	\begin{case}
		$(|X| \mod 2 = 0) \vee (|Y| \mod 2 = 0).$\\
		W.l.o.g. assume that $|Y|$ is even. Let $Y_1$ and $Y_2$ partition $Y$ into two sets of equal size. Consider ordering $\sigma_{X \cup Y} = \sigma_{Y_1}\sigma_{X}\sigma_{Y_2}$. In this ordering, the imbalance of any vertex in $X$ is zero and the imbalance of any vertex in $Y$ is $|X|$. Thus the imbalance of ordering $\sigma_{X \cup Y}$ is 
		$I(\sigma_{X \cup Y}) = |X| \cdot |Y|.$
		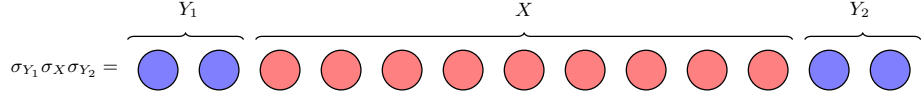
\begin{figure}[H]
			\centering
			\begin{adjustbox}{width=\textwidth}
\begin{tikzpicture}[-,semithick]

\tikzset{Y/.append style={fill=blue!50,draw=black,text=black,shape=circle,minimum size=2em}}
\tikzset{X/.append style={fill=red!50,draw=black,text=black,shape=circle,minimum size=2em}}
\tikzset{t/.append style={fill=white,draw=white,text=black}}
\node[t]         (T) {$\sigma_{Y_1}\sigma_{X}\sigma_{Y_2} =~~~~~~~~~$};
\node[Y]         (M) [right of=T] {$~$};
\node[Y]         (N) [right of=M] {$~$};
\node[X]         (A) [right of=N] {$~$};
\node[X]         (B) [right of=A] {$~$};
\node[X]         (C) [right of=B] {$~$};
\node[X]         (D) [right of=C] {$~$};
\node[X]         (E) [right of=D] {$~$};
\node[X]         (F) [right of=E] {$~$};
\node[X]         (G) [right of=F] {$~$};
\node[X]         (H) [right of=G] {$~$};
\node[X]         (I) [right of=H] {$~$};
\node[Y]         (O) [right of=I] {$~$};
\node[Y]         (P) [right of=O] {$~$};

\draw[decorate,decoration={brace,amplitude=3pt}] 
(0.5,0.5) coordinate (t_k_unten) -- (2.4,0.5) coordinate (t_k_opt_unten); 
\node[t] at (1.5,1) {$Y_1$};
\draw[decorate,decoration={brace,amplitude=3pt}] 
(2.6,0.5) coordinate (t_k_unten) -- (11.4,0.5) coordinate (t_k_opt_unten); 
\node[t] at (7,1) {$X$};
\draw[decorate,decoration={brace,amplitude=3pt}] 
(11.6,0.5) coordinate (t_k_unten) -- (13.5,0.5) coordinate (t_k_opt_unten); 
\node[t] at (12.5,1) {$Y_2$};
\end{tikzpicture}
\end{adjustbox}
			\caption{Example of ``sandwiched'' ordering for $G = K_{4,9}$.}
			\label{ap-sandwichedexample}
		\end{figure}
	\end{case}
	\begin{case}\label{compbigraphoddodd}
		$(|X| \mod 2 = 1) \wedge (|Y| \mod 2 = 1).$\\
		Let $y_m \in Y$ be an element of $Y$. Let $X_1$ and $X_2$ partition $X$ into two sets such that $||X_1| - |X_2|| = 1$ and let $Y_1$ and $Y_2$ partition $Y \setminus \{y_m\}$ into two sets such that $|Y_1| = |Y_2|$. Consider the ordering $\sigma_{X \cup Y} = \sigma_{Y_1}\sigma_{X_1}\sigma_{\{y_m\}}\sigma_{X_2}\sigma_{Y_2}$. The imbalance of any vertex in $X \cup \{y_m\}$ is 1 and the imbalance of any vertex in $Y \setminus \{y_m\}$ is $|X|$. Thus the imbalance of the ordering $\sigma_{X \cup Y}$ is $I(\sigma_{X \cup Y}) = |X| \cdot |Y| + 1$.
		\begin{figure}[H]
			\centering
			\begin{adjustbox}{width=\textwidth}
\begin{tikzpicture}[-,semithick]
  
  \tikzset{Y/.append style={fill=blue!50,draw=black,text=black,shape=circle,minimum size=2em}}
  \tikzset{X/.append style={fill=red!50,draw=black,text=black,shape=circle,minimum size=2em}}
  \tikzset{t/.append style={fill=white,draw=white,text=black}}
  \node[t]         (preT) [] {};
  \node[t]         (T) [right of=preT] {$\sigma_{Y_1}\sigma_{X_1}\sigma_{\{y_m\}}\sigma_{X_2}\sigma_{Y_2} =~~~~~~~~~~~~~~~~~~~~~~$};
  \node[Y]         (A) [right of=T] {$~$};
  \node[Y]         (B) [right of=A] {$~$};
  \node[Y]         (C) [right of=B] {$~$};
  \node[Y]         (D) [right of=C] {$~$};
  \node[X]         (M) [right of=D] {$~$};
  \node[Y]         (E) [right of=M] {$~$};
  \node[X]         (N) [right of=E] {$~$};
  \node[X]         (F) [right of=N] {$~$};
  \node[Y]         (O) [right of=F] {$~$};
  \node[Y]         (G) [right of=O] {$~$};
  \node[Y]         (H) [right of=G] {$~$};
  \node[Y]         (I) [right of=H] {$~$};
  
  \draw[decorate,decoration={brace,amplitude=3pt}] 
    (1.5,0.5) coordinate (t_k_unten) -- (5.4,0.5) coordinate (t_k_opt_unten); 
    \node[t] at (3.5,1) {$Y_1$};
  \draw[decorate,decoration={brace,amplitude=3pt}] 
    (5.6,0.5) coordinate (t_k_unten) -- (6.4,0.5) coordinate (t_k_opt_unten); 
    \node[t] at (6,1) {$X_1$};
    \draw[decorate,decoration={brace,amplitude=3pt}] 
    (6.6,0.5) coordinate (t_k_unten) -- (7.4,0.5) coordinate (t_k_opt_unten); 
    \node[t] at (7,1) {$\{y_m\}$};
    \draw[decorate,decoration={brace,amplitude=3pt}] 
    (7.6,0.5) coordinate (t_k_unten) -- (9.4,0.5) coordinate (t_k_opt_unten); 
    \node[t] at (8.5,1) {$X_2$};
  \draw[decorate,decoration={brace,amplitude=3pt}] 
    (9.6,0.5) coordinate (t_k_unten) -- (13.5,0.5) coordinate (t_k_opt_unten); 
    \node[t] at (11.5,1) {$Y_2$};
\end{tikzpicture}
\end{adjustbox}
			\caption{Example of ``pseudo-sandwich'' ordering for $G = K_{3,9}$.
			}
			\label{ap-interwovenexample}
		\end{figure}
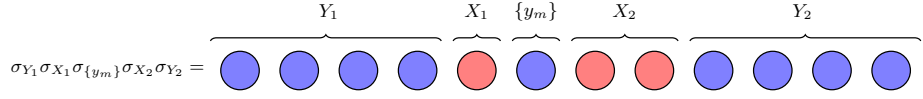
	\end{case}
\end{proof}

\begin{definition}\label{defsL}
	Let $G = (X,Y,E)$ be a complete bigraph and $\sigma_{X\cup Y}$ be an arbitrary ordering on $X\cup Y$. We define $L(Y, \sigma_{X\cup Y}) \subseteq [|X| + |Y|]$ to be the positions of the elements of $Y$ in $\sigma_{X\cup Y}$. That is 
	$L(Y, \sigma_{X\cup Y}) = \{\sigma_{X\cup Y}(y) ~|~ y \in Y\}.$
	Let us denote the elements in $L(Y, \sigma_{X\cup Y})$ as 
	$L(Y, \sigma_{X\cup Y}) = \{l^{\sigma_{X\cup Y}}_1, l^{\sigma_{X\cup Y}}_2, \dots, l^{\sigma_{X\cup Y}}_{|Y|}\}$
	such that $l^{\sigma_{X\cup Y}}_1 < l^{\sigma_{X\cup Y}}_2 < \dots < l^{\sigma_{X\cup Y}}_{|Y|}$. Additionally, we define $l^{\sigma_{X\cup Y}}_0 = 0$ and $l^{\sigma_{X\cup Y}}_{|Y|+1} = |X|+|Y|+1$. We leave out the superscript in $l^{\sigma_{X\cup Y}}_i$, when the ordering is clear from the context. 
	
	Additionally, we define $L^{\sigma_{X\cup Y}}_i$ to be the vertices of $X$ between the positions $l_i$ and $l_{i+1}$ in ordering $\sigma_{X\cup Y}$. Let $Y = \{y_1, \dots, y_{|Y|}\}$ be an arbitrary enumeration of the element in $Y$. 
	We leave out the superscript in $L^{\sigma_{X\cup Y}}_i$, when the ordering is clear from the context.
\end{definition}
\begin{figure}[H]
	\centering
	\begin{adjustbox}{width=\textwidth}
\begin{tikzpicture}[-,semithick]
  
  \tikzset{Y/.append style={fill=blue!50,draw=black,text=black,shape=circle}}
  \tikzset{X/.append style={fill=red!50,draw=black,text=black,shape=circle}}
  \tikzset{t/.append style={fill=white,draw=white,text=black}}
  \node[t]         (T) {$\sigma_{X \cup Y} =~~~$};
  \node[X]         (A) [right of=T] {$x_1$};
  \node[X]         (B) [right of=A] {$x_2$};
  \node[Y]         (M) [right of=B] {$y_1$};
  \node[X]         (C) [right of=M] {$x_3$};
  \node[Y]         (N) [right of=C] {$y_2$};
  \node[X]         (D) [right of=N] {$x_4$};
  \node[X]         (E) [right of=D] {$x_5$};
  \node[X]         (F) [right of=E] {$x_6$};
  \node[Y]         (O) [right of=F] {$y_3$};
  \node[X]         (G) [right of=O] {$x_7$};
  \node[X]         (H) [right of=G] {$x_8$};
  \node[Y]         (P) [right of=H] {$y_4$};
  \node[X]         (I) [right of=P] {$x_9$};
  \node[X]         (J) [right of=I] {$x_{10}$};
  
  \node[t] [below=0.2cm of T] {$\sigma_{X \cup Y}(v) =$};
  \node[t] [below=0.2cm of A] {1};
  \node[t] [below=0.2cm of B] {2};
  \node[t] (x1) [below=0.2cm of M] {3};
  \node[t] [below=0.2cm of C] {4};
  \node[t] (x2) [below=0.2cm of N] {5};
  \node[t] [below=0.2cm of D] {6};
  \node[t] [below=0.2cm of E] {7};
  \node[t] [below=0.2cm of F] {8};
  \node[t] (x3) [below=0.2cm of O] {9};
  \node[t] [below=0.2cm of G] {10};
  \node[t] [below=0.2cm of H] {11};
  \node[t] (x4) [below=0.2cm of P] {12};
  \node[t] [below=0.2cm of I] {13};
  \node[t] [below=0.2cm of J] {14};

  \node[t] (t1) [below=0cm of x1] {\rotatebox[origin=c]{-90}{$=$}};
  \node[t] (t2) [below=0cm of x2] {\rotatebox[origin=c]{-90}{$=$}};
  \node[t] (t3) [below=0cm of x3] {\rotatebox[origin=c]{-90}{$=$}};
  \node[t] (t4) [below=0cm of x4] {\rotatebox[origin=c]{-90}{$=$}};
  
  \node[t] [below=0cm of t1] {$l_1$};
  \node[t] [below=0cm of t2] {$l_2$};
  \node[t] [below=0cm of t3] {$l_3$};
  \node[t] [below=0cm of t4] {$l_4$};
  
  \draw[decorate,decoration={brace,amplitude=3pt}] 
  (0.5,0.5) coordinate (t_k_unten) -- (2.5,0.5) coordinate (t_k_opt_unten); 
  \node[t] at (1.5,1) {$L_0$};
  \draw[decorate,decoration={brace,amplitude=3pt}] 
  (3.5,0.5) coordinate (t_k_unten) -- (4.5,0.5) coordinate (t_k_opt_unten); 
  \node[t] at (4,1) {$L_1$};
  \draw[decorate,decoration={brace,amplitude=3pt}] 
  (5.5,0.5) coordinate (t_k_unten) -- (8.5,0.5) coordinate (t_k_opt_unten); 
  \node[t] at (7,1) {$L_2$};
  \draw[decorate,decoration={brace,amplitude=3pt}] 
  (9.5,0.5) coordinate (t_k_unten) -- (11.5,0.5) coordinate (t_k_opt_unten); 
  \node[t] at (10.5,1) {$L_3$};
  \draw[decorate,decoration={brace,amplitude=3pt}] 
  (12.5,0.5) coordinate (t_k_unten) -- (14.5,0.5) coordinate (t_k_opt_unten); 
  \node[t] at (13.5,1) {$L_4$};
\end{tikzpicture}
\end{adjustbox}
	\caption{Visualization of \cref{defsL} 
	.}
	\label{ap-cbgdefsexample}
\end{figure}



\begin{definition}
	Let $\sigma_{X \cup Y} = \sigma_{L_0}\prod\limits_{i=1}^{|Y|}\sigma_{\{y_i\}}\sigma_{L_i}$ be an arbitrary ordering.
	Let us define
	$$shift_L(\sigma_{X \cup Y}) = \sigma_{\{y_{\lfloor \frac{|Y|}{2} \rfloor}\}}\sigma_{L_0}\Bigg(\prod\limits_{i=1}^{\lfloor\frac{|Y|}{2}\rfloor-1}\sigma_{\{y_i\}}\sigma_{L_i}\Bigg)\sigma_{L_{\lfloor \frac{|Y|}{2} \rfloor}}\Bigg(\prod\limits_{i=\lfloor\frac{|Y|}{2}\rfloor+1}^{|Y|}\sigma_{\{y_i\}}\sigma_{L_i}\Bigg),$$
	$$shift_R(\sigma_{X \cup Y}) =
	\sigma_{L_0}\Bigg(\prod\limits_{i=1}^{\lceil\frac{|Y|}{2}\rceil}\sigma_{\{y_i\}}\sigma_{L_i}\Bigg)\sigma_{L_{\lceil \frac{|Y|}{2} \rceil+1}}\Bigg(\prod\limits_{i=\lceil\frac{|Y|}{2}\rceil+2}^{|Y|}\sigma_{\{y_i\}}\sigma_{L_i}\Bigg)\sigma_{\{y_{\lceil \frac{|Y|}{2} \rceil+1}\}}.$$
	That is, $shift_L$ moves vertex $y_{\lfloor \frac{|Y|}{2} \rfloor}$ to the left most position and $shift_R$ moves vertex $y_{\lceil \frac{|Y|}{2} \rceil+1}$ to the right most position.
\end{definition}
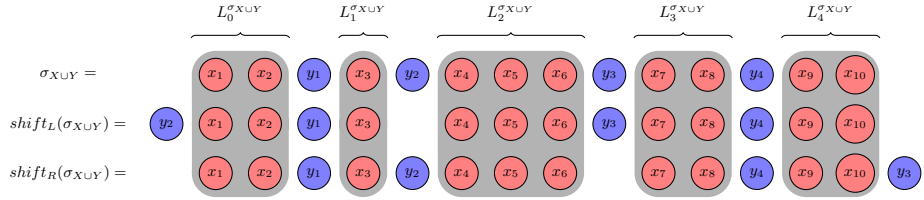
\begin{figure}[H]
	\centering
	\begin{adjustbox}{width=\textwidth}
\begin{tikzpicture}[-,semithick]
  
  \tikzset{Y/.append style={fill=blue!50,draw=black,text=black,shape=circle}}
  \tikzset{X/.append style={fill=red!50,draw=black,text=black,shape=circle}}
  \tikzset{t/.append style={fill=white,draw=white,text=black,align=right}}
  \node[t]         (T) {$\sigma_{X \cup Y} =$};
  \node[]         (empty1) [right of=T] {$~$};
  \node[]         (empty2) [right of=empty1] {$~$};
  \node[X]         (A) [right of=empty2] {$x_1$};
  \node[X]         (B) [right of=A] {$x_2$};
  \node[Y]         (M) [right of=B] {$y_1$};
  \node[X]         (C) [right of=M] {$x_3$};
  \node[Y]         (N) [right of=C] {$y_2$};
  \node[X]         (D) [right of=N] {$x_4$};
  \node[X]         (E) [right of=D] {$x_5$};
  \node[X]         (F) [right of=E] {$x_6$};
  \node[Y]         (O) [right of=F] {$y_3$};
  \node[X]         (G) [right of=O] {$x_7$};
  \node[X]         (H) [right of=G] {$x_8$};
  \node[Y]         (P) [right of=H] {$y_4$};
  \node[X]         (I) [right of=P] {$x_9$};
  \node[X]         (J) [right of=I] {$x_{10}$};

  \node[t]         (LT) [below of=T] {$shift_L(\sigma_{X \cup Y}) =$};
  \node[]         (Lempty1) [right of=LT] {$~$};
  \node[Y]         (LN) [right of=Lempty1] {$y_2$};
 \node[X]         (LA) [right of=LN] {$x_1$};
 \node[X]         (LB) [right of=LA] {$x_2$};
 \node[Y]         (LM) [right of=LB] {$y_1$};
 \node[X]         (LC) [right of=LM] {$x_3$};
 \node[]         (Lempty) [right of=LC] {$~$};
 \node[X]         (LD) [right of=Lempty] {$x_4$};
 \node[X]         (LE) [right of=LD] {$x_5$};
 \node[X]         (LF) [right of=LE] {$x_6$};
 \node[Y]         (LO) [right of=LF] {$y_3$};
 \node[X]         (LG) [right of=LO] {$x_7$};
 \node[X]         (LH) [right of=LG] {$x_8$};
 \node[Y]         (LP) [right of=LH] {$y_4$};
 \node[X]         (LI) [right of=LP] {$x_9$};
 \node[X]         (LJ) [right of=LI] {$x_{10}$};

  \node[t]         (RT) [below of=LT] {$shift_R(\sigma_{X \cup Y}) =$};
  \node[]         (Rempty1) [right of=RT] {$~$};
  \node[]         (Rempty2) [right of=Rempty1] {$~$};
  \node[X]         (RA) [right of=Rempty2] {$x_1$};
  \node[X]         (RB) [right of=RA] {$x_2$};
  \node[Y]         (RM) [right of=RB] {$y_1$};
  \node[X]         (RC) [right of=RM] {$x_3$};
  \node[Y]         (RN) [right of=RC] {$y_2$};
  \node[X]         (RD) [right of=RN] {$x_4$};
  \node[X]         (RE) [right of=RD] {$x_5$};
  \node[X]         (RF) [right of=RE] {$x_6$};
  \node[]         (Rempty2) [right of=RF] {$~$};
  \node[X]         (RG) [right of=Rempty2] {$x_7$};
  \node[X]         (RH) [right of=RG] {$x_8$};
  \node[Y]         (RP) [right of=RH] {$y_4$};
  \node[X]         (RI) [right of=RP] {$x_9$};
  \node[X]         (RJ) [right of=RI] {$x_{10}$};
  \node[Y]         (RO) [right of=RJ] {$y_3$};
  
  \draw[decorate,decoration={brace,amplitude=3pt}] 
  (2.5,0.75) coordinate (t_k_unten) -- (4.5,0.75) coordinate (t_k_opt_unten); 
  \node[t] at (3.5,1.25) {$L_0^{\sigma_{X \cup Y}}$};
  \draw[decorate,decoration={brace,amplitude=3pt}] 
  (5.5,0.75) coordinate (t_k_unten) -- (6.5,0.75) coordinate (t_k_opt_unten); 
  \node[t] at (6,1.25) {$L_1^{\sigma_{X \cup Y}}$};
  \draw[decorate,decoration={brace,amplitude=3pt}] 
  (7.5,0.75) coordinate (t_k_unten) -- (10.5,0.75) coordinate (t_k_opt_unten); 
  \node[t] at (9,1.25) {$L_2^{\sigma_{X \cup Y}}$};
  \draw[decorate,decoration={brace,amplitude=3pt}] 
  (11.5,0.75) coordinate (t_k_unten) -- (13.5,0.75) coordinate (t_k_opt_unten); 
  \node[t] at (12.5,1.25) {$L_3^{\sigma_{X \cup Y}}$};
  \draw[decorate,decoration={brace,amplitude=3pt}] 
  (14.5,0.75) coordinate (t_k_unten) -- (16.5,0.75) coordinate (t_k_opt_unten); 
  \node[t] at (15.5,1.25) {$L_4^{\sigma_{X \cup Y}}$};
  
  \begin{scope}[on background layer]
  \fill[black,opacity=0.3] \convexpath{A,B,LB,RB,RA,LA}{1.5em}; 
  \fill[black,opacity=0.3] \convexpath{C,LC,RC}{1.5em};  
  \fill[black,opacity=0.3] \convexpath{D,F,RF,RD}{1.5em}; 
  \fill[black,opacity=0.3] \convexpath{G,H,RH,RG}{1.5em}; 
  \fill[black,opacity=0.3] \convexpath{I,J,RJ,RI}{1.5em}; 
  \end{scope}
\end{tikzpicture}
\end{adjustbox}
	\caption{Visualization of the $shift_L$ and $shift_R$ functions with an ordering $\sigma_{X \cup Y}$ on the vertices of graph $G = K_{10,4}$.}
	\label{ap-shift}
\end{figure}
\begin{lemma}\label{lem2}
	Let $\sigma_{X \cup Y}$ be an arbitrary imbalance optimal ordering. Let $\sigma'_{X \cup Y} = shift_L(\sigma_{X \cup Y})$. We have that $I(\sigma_{X \cup Y}) = I(\sigma'_{X \cup Y})$.
\end{lemma}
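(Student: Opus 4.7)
The plan is to compute the change $\Delta := I(\sigma'_{X\cup Y}) - I(\sigma_{X\cup Y})$ directly and show that $\Delta \le 0$; since $\sigma_{X\cup Y}$ is optimal, the reverse inequality $\Delta \ge 0$ is automatic, so both together force $\Delta = 0$, which is exactly the conclusion. Set $n := |X|$, $m := |Y|$, $k := \lfloor m/2 \rfloor$, and let $a := \sum_{i=0}^{k-1}|L_i|$, which counts the $X$-vertices lying to the left of $y_k$ in $\sigma_{X\cup Y}$. Note that $shift_L$ relocates only the single vertex $y_k$, from position $l_k$ to position $1$, while preserving all other relative orderings.

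Next I would identify which vertices contribute to $\Delta$. Every $y_j$ with $j \neq k$ has the same set of $X$-vertices on each of its sides in both orderings (since only a $Y$-vertex moved), so its imbalance is unchanged; likewise, for each $x \in L_i$ with $i \ge k$ the vertex $y_k$ already lay to the left of $x$ and still does, so its imbalance is also unchanged. The only contributions to $\Delta$ therefore come from $y_k$ itself and from each $x \in L_i$ with $i < k$. For such an $x$, the vertex $y_k$ moves from the right of $x$ to the left of $x$, so its $Y$-neighbor counts shift from $(i, m-i)$ to $(i+1, m-i-1)$, and its imbalance changes from $|m - 2i|$ to $|m - 2i - 2|$. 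For $y_k$ itself, all of $X$ now lies to its right, so its imbalance changes from $|2a - n|$ to $n$.

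A short parity analysis on $m$ shows that for every $0 \le i \le k-1$ both $m - 2i$ and $m - 2i - 2$ are non-negative, so $|m - 2i - 2| - |m - 2i| = -2$; summing over all $x \in L_i$ with $i < k$ contributes $-2a$ to $\Delta$. Adding the change $n - |2a - n|$ from $y_k$ gives
\[
\Delta = \bigl(n - |2a - n|\bigr) - 2a,
\]
and splitting into the cases $a \le n/2$ and $a > n/2$ yields $\Delta = 0$ and $\Delta = 2(n - 2a) < 0$ respectively, so $\Delta \le 0$ in both. Combined with optimality this gives $I(\sigma'_{X\cup Y}) = I(\sigma_{X\cup Y})$. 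The only (very minor) obstacle is the parity case analysis verifying $|m - 2i - 2| - |m - 2i| = -2$ uniformly for $0 \le i \le k-1$; this is a short check but one must confirm that both arguments of the absolute value are non-negative throughout that range (using $i \le k-1 \le \lfloor m/2\rfloor - 1$).
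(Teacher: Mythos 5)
Your proposal is correct and follows essentially the same route as the paper's proof: both identify that only $y_{\lfloor |Y|/2\rfloor}$ and the $X$-vertices in $L_0,\dots,L_{\lfloor |Y|/2\rfloor-1}$ change imbalance (each of the latter by exactly $-2$), and both split on whether the number $a$ of such $X$-vertices exceeds $|X|/2$, using optimality to exclude the strictly decreasing case. Your explicit parity check that $|m-2i-2|-|m-2i|=-2$ for all $i\le \lfloor m/2\rfloor-1$ is a detail the paper asserts without verification, so your write-up is if anything slightly more complete.
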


\begin{proof}
	We have that $I(\sigma_{X \cup Y}, y_{\lfloor \frac{|Y|}{2} \rfloor}) = \Bigg|\Bigg(\sum\limits_{i=0}^{\lfloor \frac{|Y|}{2} \rfloor-1}|L^{\sigma_{X \cup Y}}_i|\Bigg) - \Bigg(\sum\limits_{i=\lfloor \frac{|Y|}{2} \rfloor}^{|Y|}|L^{\sigma_{X \cup Y}}_i|\Bigg)\Bigg|$ and $I(\sigma'_{X \cup Y}, y_{\lfloor \frac{|Y|}{2} \rfloor}) = \sum\limits_{i=0}^{|Y|}|L^{\sigma_{X \cup Y}}_i| = |X|$. For each $0 \leq i \leq \lfloor \frac{|Y|}{2} \rfloor-1$, the imbalance of all the vertices in $L^{\sigma_{X \cup Y}}_i$ is smaller by 2 in $\sigma'_{X \cup Y}$. The imbalance of the remaining vertices remain the same. 
	\begin{case} $\Bigg(\sum\limits_{i=0}^{\lfloor \frac{|Y|}{2} \rfloor-1}|L^{\sigma_{X \cup Y}}_i|\Bigg) > \Bigg(\sum\limits_{i=\lfloor \frac{|Y|}{2} \rfloor}^{|Y|}|L^{\sigma_{X \cup Y}}_i|\Bigg)$.\\
	We can express the imbalance of $I(\sigma'_{X \cup Y})$ as follows:
	\begin{align*}
		I(\sigma'_{X \cup Y})  & = I(\sigma_{X \cup Y}) - 2\cdot\Bigg( \sum\limits_{i=0}^{\lfloor \frac{|Y|}{2} \rfloor-1}|L^{\sigma_{X \cup Y}}_i| \Bigg) - I(\sigma_{X \cup Y}, y_{\lfloor \frac{|Y|}{2} \rfloor}) \\
		& + I(\sigma'_{X \cup Y}, y_{\lfloor \frac{|Y|}{2} \rfloor})\\
		& = I(\sigma_{X \cup Y}) - 2\cdot\Bigg( \sum\limits_{i=0}^{\lfloor \frac{|Y|}{2} \rfloor-1}|L^{\sigma_{X \cup Y}}_i| \Bigg) - \Bigg(\sum\limits_{i=0}^{\lfloor \frac{|Y|}{2} \rfloor-1}|L^{\sigma_{X \cup Y}}_i|\Bigg) \\
		&+ \Bigg(\sum\limits_{i=\lfloor \frac{|Y|}{2} \rfloor}^{|Y|}|L^{\sigma_{X \cup Y}}_i|\Bigg) + \sum\limits_{i=0}^{|Y|}|L^{\sigma_{X \cup Y}}_i|\\
		& <  I(\sigma_{X \cup Y}).
	\end{align*}
	Since $\sigma_{X \cup Y}$ it is not possible that $I(\sigma'_{X \cup Y}) < I(\sigma_{X \cup Y})$. Thus this case can not occur.
	\end{case}
	\begin{case}
		$\Bigg(\sum\limits_{i=0}^{\lfloor \frac{|Y|}{2} \rfloor-1}|L^{\sigma_{X \cup Y}}_i|\Bigg) \leq \Bigg(\sum\limits_{i=\lfloor \frac{|Y|}{2} \rfloor}^{|Y|}|L^{\sigma_{X \cup Y}}_i|\Bigg)$.\\
		We can express the imbalance of $I(\sigma'_{X \cup Y})$ as follows:
		\begin{align*}
		I(\sigma'_{X \cup Y})  & = I(\sigma_{X \cup Y}) - 2\cdot\Bigg( \sum\limits_{i=0}^{\lfloor \frac{|Y|}{2} \rfloor-1}|L^{\sigma_{X \cup Y}}_i| \Bigg) - I(\sigma_{X \cup Y}, y_{\lfloor \frac{|Y|}{2} \rfloor}) \\
		& + I(\sigma'_{X \cup Y}, y_{\lfloor \frac{|Y|}{2} \rfloor})\\
		& = I(\sigma_{X \cup Y}) - 2\cdot\Bigg( \sum\limits_{i=0}^{\lfloor \frac{|Y|}{2} \rfloor-1}|L^{\sigma_{X \cup Y}}_i| \Bigg) - \Bigg(\sum\limits_{i=\lfloor \frac{|Y|}{2} \rfloor}^{|Y|}|L^{\sigma_{X \cup Y}}_i|\Bigg) \\
		&+ \Bigg(\sum\limits_{i=0}^{\lfloor \frac{|Y|}{2} \rfloor-1}|L^{\sigma_{X \cup Y}}_i|\Bigg)  + \sum\limits_{i=0}^{|Y|}|L_i|\\
		& =  I(\sigma_{X \cup Y}).
		\end{align*}
	\end{case}
\end{proof}
\begin{lemma}\label{lem3}
	Let $\sigma_{X \cup Y}$ be an arbitrary imbalance optimal ordering. Let $\sigma'_{X \cup Y} = shift_R(\sigma_{X \cup Y})$. We have that $I(\sigma_{X \cup Y}) = I(\sigma'_{X \cup Y})$.
\end{lemma}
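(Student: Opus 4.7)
The plan is to mirror the proof of \cref{lem2} almost verbatim, but now focusing on the vertex $y_{\lceil \frac{|Y|}{2} \rceil+1}$ that is moved all the way to the right, and on the vertices in the blocks $L^{\sigma_{X \cup Y}}_{\lceil \frac{|Y|}{2} \rceil+1}, \ldots, L^{\sigma_{X \cup Y}}_{|Y|}$, whose imbalance is the only imbalance (besides that of the moved $y$-vertex) that changes under $shift_R$.

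First I would compute the two relevant single-vertex imbalances. Directly from the definitions,
\[
I(\sigma_{X \cup Y}, y_{\lceil \frac{|Y|}{2} \rceil+1}) = \Bigg|\Bigg(\sum_{i=0}^{\lceil \frac{|Y|}{2} \rceil}|L^{\sigma_{X \cup Y}}_i|\Bigg) - \Bigg(\sum_{i=\lceil \frac{|Y|}{2} \rceil+1}^{|Y|}|L^{\sigma_{X \cup Y}}_i|\Bigg)\Bigg|,
\]
while after the shift all $X$-vertices lie to the left of $y_{\lceil \frac{|Y|}{2} \rceil+1}$, giving $I(\sigma'_{X \cup Y}, y_{\lceil \frac{|Y|}{2} \rceil+1}) = |X|$. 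Moving $y_{\lceil \frac{|Y|}{2} \rceil+1}$ past the blocks $L^{\sigma_{X \cup Y}}_{\lceil \frac{|Y|}{2} \rceil+1}, \ldots, L^{\sigma_{X \cup Y}}_{|Y|}$ flips one neighbor from the right side to the left side for each vertex in these blocks, so the imbalance of every such vertex changes by $\pm 2$; in fact, since each such $x \in X$ gains one left-neighbor and loses one right-neighbor, the contribution decreases by exactly $2$ per vertex. All other vertex imbalances are unaffected because the relative order of any pair not involving $y_{\lceil \frac{|Y|}{2} \rceil+1}$ is preserved.

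Then, exactly as in \cref{lem2}, I would split into two cases according to which side of the absolute value is larger. In the case
\[
\Bigg(\sum_{i=0}^{\lceil \frac{|Y|}{2} \rceil}|L^{\sigma_{X \cup Y}}_i|\Bigg) < \Bigg(\sum_{i=\lceil \frac{|Y|}{2} \rceil+1}^{|Y|}|L^{\sigma_{X \cup Y}}_i|\Bigg),
\]
expanding $I(\sigma'_{X \cup Y}) = I(\sigma_{X \cup Y}) - 2\sum_{i=\lceil \frac{|Y|}{2} \rceil+1}^{|Y|}|L^{\sigma_{X \cup Y}}_i| - I(\sigma_{X \cup Y}, y_{\lceil \frac{|Y|}{2} \rceil+1}) + |X|$ and substituting the explicit expressions above produces a strict inequality $I(\sigma'_{X \cup Y}) < I(\sigma_{X \cup Y})$, which contradicts the optimality of $\sigma_{X \cup Y}$; hence this case is impossible. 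In the other case, the same expansion collapses to $I(\sigma'_{X \cup Y}) = I(\sigma_{X \cup Y})$, which is the claim.

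I expect the only real obstacle to be bookkeeping: keeping the index ranges $0,\ldots,\lceil |Y|/2 \rceil$ versus $\lceil |Y|/2 \rceil+1,\ldots,|Y|$ straight, and being careful that the sign of the change in imbalance for vertices in $L^{\sigma_{X \cup Y}}_i$ with $i \geq \lceil |Y|/2 \rceil+1$ is indeed a decrease by $2$ (not $+2$), since $y_{\lceil |Y|/2 \rceil+1}$ was to their left before and is now to their right. Once the signs are pinned down, the two cases are purely algebraic mirrors of those in \cref{lem2}.
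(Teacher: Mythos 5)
Your proposal is correct and is essentially the paper's own argument: the paper proves this lemma simply by declaring it ``analogous to \cref{lem2},'' and your adaptation --- computing $I(\sigma_{X\cup Y}, y_{\lceil|Y|/2\rceil+1})$ and $I(\sigma'_{X\cup Y}, y_{\lceil|Y|/2\rceil+1})=|X|$, noting the per-vertex change of $2$ in the blocks $L_{\lceil|Y|/2\rceil+1},\dots,L_{|Y|}$, and splitting on the sign of $\sum_{i=0}^{\lceil|Y|/2\rceil}|L_i|-\sum_{i=\lceil|Y|/2\rceil+1}^{|Y|}|L_i|$ with one case killed by optimality --- is exactly that analogue. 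One small slip worth fixing: mid-proof you say each affected $x$ ``gains one left-neighbor and loses one right-neighbor,'' which is backwards (it \emph{loses} a left-neighbor and \emph{gains} a right-neighbor, as your closing paragraph correctly states); since each such $x$ has at least two more left- than right-neighbors, this is precisely why its imbalance drops by exactly $2$ rather than rising.
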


\begin{proof}
	Analogous to \cref{lem2}.
\end{proof}

\begin{remark}\label{rmrk4}
	Let both $|X|$ and $|Y|$ be odd. Let $\sigma_{X \cup Y}$ be an arbitrary imbalance optimal ordering. Let $y_m \in Y$ be the vertex at position $l_{\lceil\frac{|Y|}{2}\rceil}$.
	We have that $I(\sigma_{X \cup Y}, y_m) = 1$. Otherwise $\sigma_{X \cup Y}$ is not imbalance optimal. This can be shown by a proof by contradiction using \cref{lem2} and \cref{lem3}.
\end{remark}

\begin{theorem}\label{thm1}
	If $G = (X,Y,E)$ is a complete bigraph, then the minimum imbalance of $G$ is
	$|X|\cdot |Y| + (|X| \mod 2) \cdot (|Y| \mod 2).$
\end{theorem}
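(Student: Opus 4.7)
The plan is to deduce the equation by combining the upper bound already provided by Lemma~\ref{lmacbg1} with a matching lower bound proved by induction on $|Y|$. The upper bound requires no work: Lemma~\ref{lmacbg1} exhibits an ordering whose imbalance is exactly $|X|\cdot|Y|+(|X|\bmod 2)(|Y|\bmod 2)$, so $I(G)$ does not exceed this quantity.

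For the lower bound I would first dispose of the base cases $|Y|\in\{0,1\}$ by direct calculation: when $|Y|=1$ every $x\in X$ has imbalance exactly $1$, while the unique $y$-vertex placed after $k$ of the $x$-vertices contributes $|2k-|X||$, which is minimized by putting $y$ near the middle and equals $0$ or $1$ according to the parity of $|X|$, giving a total of $|X|+(|X|\bmod 2)$, exactly the claimed formula. For the inductive step ($|Y|\geq 2$), I would start from an arbitrary imbalance-optimal ordering $\sigma$ and apply Lemma~\ref{lem2} followed by Lemma~\ref{lem3} to obtain the ordering $\sigma''=shift_R(shift_L(\sigma))$, which remains optimal by the two lemmas, in which a $y$-vertex $y^L$ occupies position $1$ and a $y$-vertex $y^R$ occupies the last position. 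A short parity check confirms that $y^L\neq y^R$, since $\lfloor|Y|/2\rfloor\neq\lceil|Y|/2\rceil+1$ in both the even and odd cases.

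The central computation is then that $I(\sigma'')=2|X|+I(\tau)$, where $\tau$ is the ordering on $X\cup(Y\setminus\{y^L,y^R\})$ induced by $\sigma''$. The bookkeeping is as follows: each of $y^L$ and $y^R$ sees all of $X$ on one side, contributing $|X|$ apiece; every $x\in X$ gains exactly one extra left-neighbor ($y^L$) and one extra right-neighbor ($y^R$) when these two are reinserted, leaving $I(x)$ unchanged; and every remaining $y$-vertex is nonadjacent to $y^L,y^R$, so its imbalance is also preserved. Since $\tau$ orders a complete bipartite graph on parts of sizes $|X|$ and $|Y|-2$, and $(|Y|-2)\bmod 2=|Y|\bmod 2$, the inductive hypothesis gives $I(\tau)\geq |X|(|Y|-2)+(|X|\bmod 2)(|Y|\bmod 2)$. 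Adding $2|X|$ yields $I(G)=I(\sigma'')\geq |X|\cdot|Y|+(|X|\bmod 2)(|Y|\bmod 2)$, and combined with the upper bound the theorem follows.

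The main obstacle I expect is the careful accounting in the central computation: one must verify that deleting $y^L$ and $y^R$ cancels exactly, so that the reduction to $K_{|X|,|Y|-2}$ really is clean, and that no edge of the restricted graph is miscounted. The remainder is routine. Note that Remark~\ref{rmrk4} is not strictly needed along this route, because the $+1$ in the odd--odd case is absorbed into the base case $|Y|=1$ with $|X|$ odd; an alternative proof could instead peel off the middle $y$-vertex identified by Remark~\ref{rmrk4} in the odd--odd case and reduce directly to the remaining even situation.
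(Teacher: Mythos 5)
Your proposal is correct, and it reorganizes the argument in a way that differs meaningfully from the paper's proof, even though both rest on the same engine, namely \cref{lem2} and \cref{lem3}. The paper's proof of \cref{thm1} takes an arbitrary optimal ordering and \emph{iterates} $shift_L$ and $shift_R$ until the ordering coincides with the canonical one from \cref{lmacbg1}, then reads off the imbalance; this is short but leaves implicit both the termination of that process in the sandwich form and the placement of the middle $Y$-vertex in the odd--odd case (which is why \cref{rmrk4} hovers in the background). You instead apply each shift exactly once, observe that the resulting optimal ordering has distinct $Y$-vertices $y^L$ and $y^R$ at the two extreme positions, verify that deleting them costs exactly $2|X|$ and leaves every other vertex's imbalance untouched (correct: each $x \in X$ gains one neighbor on each side, and the remaining $Y$-vertices are nonadjacent to $y^L,y^R$), and then induct on $|Y|$ via the clean identity $(|Y|-2)\bmod 2 = |Y|\bmod 2$. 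Your bookkeeping checks out, including the range conditions $1 \le \lfloor|Y|/2\rfloor < \lceil|Y|/2\rceil+1 \le |Y|$ for $|Y|\ge 2$, the fact that $shift_R$ leaves $y^L$ in position $1$ (its $L_0$-block is empty after $shift_L$), and the base cases $|Y|\in\{0,1\}$ via the parity lower bound $I(v)\ge \deg(v)\bmod 2$. What your route buys is rigor: the convergence-to-canonical-form step and \cref{rmrk4} are replaced by a two-vertex peeling whose accounting is fully explicit. What the paper's route buys is brevity and the fact that it directly exhibits the structure of \emph{all} optimal orderings, which is then harvested in \cref{col-prop}; your induction establishes the value of $I(G)$ but would need the extra observations of \cref{lem2}, \cref{lem3}, and \cref{rmrk4} restated to recover that characterization.
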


\begin{proof}
Let $\sigma_{X \cup Y}$ be an arbitrary imbalance optimal ordering. Repeatedly apply the functions $shift_L$ and $shift_R$ on $\sigma_{X \cup Y}$, until we have the same ordering as constructed in \cref{lmacbg1}. Let us denote the obtained ordering as $\sigma'_{X \cup Y}$. Since $I(\sigma_{X \cup Y}) = I(\sigma'_{X \cup Y})$ by \cref{lem2} and \cref{lem3}, and $I(\sigma'_{X \cup Y}) = |X|\cdot |Y| + (|X| \mod 2) \cdot (|Y| \mod 2)$, we have that 
$I(G) = |X|\cdot |Y| + (|X| \mod 2) \cdot (|Y| \mod 2).$
\end{proof}

\begin{corollary}\label{col-prop}
	By \cref{lem2}, \cref{lem3}, and \cref{rmrk4}, any ordering $\sigma_{X \cup Y}$ of a complete bigraph $G=(X,Y,E)$ is imbalance optimal if and only if $\sigma_{X \cup Y}$ has the following 3 properties:
	\begin{enumerate}
		\item $\Bigg(\sum\limits_{i=0}^{\lfloor \frac{|Y|}{2} \rfloor-1}|L^{\sigma_{X \cup Y}}_i|\Bigg) \leq \Bigg(\sum\limits_{i=\lfloor \frac{|Y|}{2} \rfloor}^{|Y|}|L^{\sigma_{X \cup Y}}_i|\Bigg)$
		\item $\Bigg(\sum\limits_{i=0}^{\lceil \frac{|Y|}{2} \rceil}|L^{\sigma_{X \cup Y}}_i|\Bigg) \geq \Bigg(\sum\limits_{i=\lceil \frac{|Y|}{2} \rceil+1}^{|Y|}|L^{\sigma_{X \cup Y}}_i|\Bigg)$
		\item $|X|$ and $|Y|$ are odd $\implies$ $I(\sigma_{X \cup Y}, y_m) = 1$, where $y_m \in Y$ is the vertex at position $l_{\lceil \frac{|Y|}{2} \rceil}$.
	\end{enumerate}
These properties allow us to verify whether any ordering is imbalance optimal in $O(|X|+|Y|)$.
\end{corollary}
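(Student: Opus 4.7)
The plan is to prove both directions of the biconditional and then note the linear-time verification. For the forward direction ($\Rightarrow$), assume $\sigma_{X \cup Y}$ is imbalance optimal. The proof of \cref{lem2} splits into two cases, and Case~1 (which is precisely the situation where Property~1 fails) derives $I(shift_L(\sigma_{X \cup Y})) < I(\sigma_{X \cup Y})$; this contradicts optimality of $\sigma_{X \cup Y}$, so Case~1 is ruled out and Property~1 must hold. The symmetric argument based on \cref{lem3} yields Property~2. Property~3 is exactly the content of \cref{rmrk4}.

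For the backward direction ($\Leftarrow$), the strategy is to show that any $\sigma_{X \cup Y}$ satisfying the three properties has imbalance equal to $|X| \cdot |Y| + (|X| \bmod 2)(|Y| \bmod 2)$, which is the minimum by \cref{thm1}. The key observation is that Case~2 in the proof of \cref{lem2} established $I(shift_L(\sigma)) = I(\sigma)$ using only the hypothesis of Property~1 (optimality was never invoked in that subcase), and the analogous subcase for \cref{lem3} gives $I(shift_R(\sigma)) = I(\sigma)$ whenever Property~2 holds. I would then iteratively apply $shift_L$ and $shift_R$, each step preserving the imbalance, until the resulting ordering coincides with the canonical sandwich form constructed in \cref{lmacbg1}; the imbalance of that canonical form, already computed to be $|X| \cdot |Y| + (|X| \bmod 2)(|Y| \bmod 2)$, is then shared by $\sigma_{X \cup Y}$. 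In the both-odd case, Property~3 matches the extra $+1$ contributed by the central vertex $y_m$.

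The $O(|X|+|Y|)$ verification is immediate: a single left-to-right pass records $|L_0|, \dots, |L_{|Y|}|$ along with the needed prefix and suffix sums, so Properties~1 and~2 are checked in $O(|X|+|Y|)$, and when both $|X|$ and $|Y|$ are odd the value $I(\sigma_{X \cup Y}, y_m)$ can be computed in the same pass by counting neighbors on each side of $y_m$. The main obstacle is the backward direction, specifically the bookkeeping required to argue that each intermediate ordering produced by the shift sequence still satisfies the property needed to invoke Case~2 of \cref{lem2} or \cref{lem3}, since the middle indices $\lfloor |Y|/2 \rfloor$ and $\lceil |Y|/2 \rceil$ and the relevant $L$-blocks all change after each shift. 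A cleaner alternative would be to evaluate $I(\sigma_{X \cup Y})$ directly from the block decomposition under Properties~1--3, pairing the contributions of $X$-vertices in $L_k$ (each equal to $|2k - |Y||$) with those of the $Y$-vertices (each $|2S_i - |X||$) so that the two sums collapse into $|X|\cdot|Y|$ plus the correction term.
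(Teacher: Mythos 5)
Your forward direction is fine: necessity of Properties~1 and~2 is exactly the contrapositive reading of Case~1 in the proofs of \cref{lem2} and \cref{lem3}, and Property~3 is \cref{rmrk4}. The backward direction, however, contains a genuine gap, and the obstacle you flag yourself is fatal rather than a bookkeeping nuisance. The functions $shift_L$ and $shift_R$ only extract $Y$-vertices and deposit them at the two ends; they never permute the $X$-vertices and never touch the middle $Y$-vertex when $|Y|$ is odd. So the iteration does not converge to the canonical ordering of \cref{lmacbg1}: it terminates at a ``pseudo-sandwich'' $\sigma_{Y_1}\sigma_{L_0}\cdots\sigma_{L_{\lfloor |Y|/2\rfloor}}\sigma_{\{y_m\}}\sigma_{L_{\lceil |Y|/2\rceil}}\cdots\sigma_{L_{|Y|}}\sigma_{Y_2}$ whose imbalance is $|X|\cdot|Y| + I(\sigma_{X\cup Y}, y_m)$, where $I(\sigma_{X\cup Y}, y_m)$ is inherited unchanged from the input ordering. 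Properties~1--3 pin this quantity down to $1$ only when both $|X|$ and $|Y|$ are odd; when $|Y|$ is odd and $|X|$ is even nothing forces it to be $0$, and the claimed equivalence fails. Concretely, for $G=K_{2,3}$ the ordering $(y_1,y_2,x_1,x_2,y_3)$ has $|L_0|=|L_1|=|L_3|=0$ and $|L_2|=2$, so Properties~1 and~2 hold and Property~3 is vacuous, yet its imbalance is $2+2+1+1+2=8$ while $I(K_{2,3})=6$. So no argument can close your backward direction as the statement stands; the characterization needs the stronger third condition ``$|Y|$ odd $\implies I(\sigma_{X\cup Y}, y_m) = |X| \bmod 2$'' (or the analogous conditions stated for both parts).

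Your proposed ``cleaner alternative'' is in fact the right instrument and, carried out, exposes exactly this: writing $P_j = \sum_{i=0}^{j}|L_i|$, Properties~1 and~2 place $m/2$ between $P_{\lfloor |Y|/2\rfloor-1}$ and $P_{\lceil |Y|/2\rceil}$, and a direct summation of the vertex imbalances then collapses to $|X|\cdot|Y|$ when $|Y|$ is even but to $|X|\cdot|Y| + \big||X| - 2P_{\lfloor |Y|/2\rfloor}\big|$ when $|Y|$ is odd. I would recommend abandoning the shift iteration for the converse and doing this computation explicitly, with the corrected third condition. The $O(|X|+|Y|)$ verification claim is unaffected.
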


\begin{corollary}\label{thm1.5}
	Let $G = (X,Y,E)$ be a complete bigraph and let $|X| + |Y| = n$. Given $|X|$ and $|Y|$, the minimum imbalance $I(G)$ can be computed in $\mathcal{O}(\log(n) \cdot \log(\log(n)))$ time by using the formula of \cref{thm1}.
	This follows from the fact that the product of two $k$-bit integers can be computed in $\mathcal{O}(k \cdot \log(k))$ time\cite{10.4007/annals.2021.193.2.4}. 
\end{corollary}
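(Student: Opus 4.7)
The plan is to reduce the claim to a straightforward bit-complexity calculation on the closed-form expression in \cref{thm1}. Since $|X| + |Y| = n$, both cardinalities fit in $k = \lceil \log_2(n+1) \rceil = \mathcal{O}(\log n)$ bits, so the task becomes bounding the cost of evaluating $|X| \cdot |Y| + (|X| \bmod 2)\cdot(|Y| \bmod 2)$ on such integers.

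First, I would itemize the arithmetic operations in the formula and assign each a bit-complexity. The term $|X| \bmod 2$ and $|Y| \bmod 2$ are each just a parity lookup on the least significant bit, costing $\mathcal{O}(1)$, and their product is a single bit AND, also $\mathcal{O}(1)$. The main multiplication $|X| \cdot |Y|$ is between two $k$-bit integers with $k = \mathcal{O}(\log n)$; invoking the cited Harvey--van der Hoeven bound of $\mathcal{O}(k \log k)$ for integer multiplication gives $\mathcal{O}(\log(n)\cdot \log(\log(n)))$ for this step. The final addition is of a $2k$-bit integer and a single bit, which is $\mathcal{O}(\log n)$.

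Summing these contributions, the total cost is dominated by the single multiplication and is therefore $\mathcal{O}(\log(n)\cdot \log(\log(n)))$ as required. Strictly speaking, one should also note that reading the inputs $|X|$ and $|Y|$ takes $\Theta(\log n)$ time, which is already absorbed in the stated bound.

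There is no genuine obstacle here; the corollary is really a bookkeeping exercise on top of \cref{thm1}. The only subtlety worth flagging is to make explicit that all quantities involved in the formula have bit-length $\mathcal{O}(\log n)$, so that the state-of-the-art multiplication bound applies directly and dominates the parity and addition steps.
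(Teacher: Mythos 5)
Your proposal is correct and follows essentially the same route as the paper, which justifies the corollary by observing that $|X|$ and $|Y|$ are $\mathcal{O}(\log n)$-bit integers and invoking the $\mathcal{O}(k\log k)$ multiplication bound; you merely spell out the bookkeeping (parity, addition, input reading) that the paper leaves implicit.
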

\section{Imbalance on Chained Complete bipartite graphs}\label{c4}
In this section we shall introduce the chained complete bigraph. We show that the chained complete bigraph is a subclass of PI-bigraphs and how to use the results of \cref{c3} to compute its minimum imbalance efficiently. 

\begin{definition}
	We define $\mathscr{C}$ to be a family of maximal subsets of the vertices of graph $G = (V,E)$ that induce a complete bigraph on $G$. Additionally, for all edges $e \in E$ there exists a vertex set $C_i \in \mathscr{C}$ such that both endpoints of $e$ are contained in $C_i$. That is, $\mathscr{C} \subseteq \mathcal{P}(V)$ such that:
	\begin{align*}
	(\forall_{\{u,v\}\in E}~\exists_{C_i\in \mathscr{C}}~ \{u,v\} \subseteq C_i) ~\wedge 
	(\forall_{C_i \in \mathscr{C}}~\exists_{j,k \in \mathbb{N}}~ G[C_i] = K_{j,k}) ~\wedge \\ 
	(\forall_{C_i \in \mathscr{C}}~\forall_{v \in V \setminus C_i}~\forall_{j,k \in \mathbb{N}}~ G[C_i \cup \{v\}] \neq K_{j,k}),
	\end{align*}
		where $K_{i,j}$ denotes a complete bigraph. We call $\mathscr{C}$ the maximal complete bigraph components, abbreviated by MCB-components, of $G$.
\end{definition}

\begin{definition}\label{restpibigraphdef}
	A graph $G = (X,Y,E)$ is a chained complete bigraph, if $G$ has a MCB-component family $\mathscr{C}$ such that we can label 
	$\mathscr{C}=\{C_1, \dots, C_n\}$
	such that consecutive vertex sets $C_i$ and $C_{i+1}$ share exactly one vertex and non-consecutive vertex sets $C_i$ and $C_j$ share no vertices. Formally,
	$(\forall_{1\leq i < n}~|C_i \cap C_{i+1}| = 1) \wedge (\forall_{1\leq i < j < n}~j-i > 1 \implies C_i \cap C_j = \emptyset).$
	We call a vertex that is shared by two consecutive vertex sets of $\mathscr{C}$ an overlapping vertex.
\end{definition}

For any chained complete bigraph $G$, with corresponding MCB-component family $\mathscr{C}$, we can create a corresponding PI-bigraph interval representation $I_G$. For each $C_i \in \mathscr{C}$ we create a staircase-shaped set of intervals with the overlapping vertices at the top and bottom.

\begin{figure}[H]
	\centering
	\begin{tikzpicture}[-,semithick]
  
  \tikzset{Y/.append style={fill=blue!50,draw=black,text=black,shape=circle}}
  \node[Y]         (y1) at (0,0) {$y_1$};
  \node[Y]         (y2) at (2,0) {$y_2$};
  \node[Y]         (y3) at (4,0) {$y_3$};
  \node[Y]         (y4) at (6,0) {$y_4$};
  \node[Y]         (y5) at (8,0) {$y_5$};
  \node[Y]         (y6) at (10,0) {$y_6$};
  
  \tikzset{X/.append style={fill=red!50,draw=black,text=black,shape=circle}}
  \node[X]         (x1) at (1,1) {$x_1$};
  \node[X]         (x2) at (3,1) {$x_2$};
  \node[X]         (x3) at (5,1) {$x_3$};
  \node[X]         (x4) at (7,1) {$x_4$};
  \node[X]         (x5) at (9,1) {$x_5$};

  \tikzset{t/.append style={fill=white,draw=white,text=black}}
  \node[t]         at (1,-1) {$C_1$};
  \node[t]         at (5,-1) {$C_2$};
  \node[t]         at (9,-1) {$C_3$};

  \path (x1) edge              node {} (y1)
             edge              node {} (y2)
        (x2) edge              node {} (y1)
             edge              node {} (y2)
             edge              node {} (y3)
             edge              node {} (y4)
        (x3) edge              node {} (y3)
             edge              node {} (y4)
        (x4) edge              node {} (y5)
             edge              node {} (y6)
             edge              node {} (y4)
        (x5) edge              node {} (y5)
             edge              node {} (y6)
             edge              node {} (y4);
   \begin{scope}[on background layer]     
   \fill[black,opacity=0.3] \convexpath{x1,x2,y2,y1}{1.5em}; 
   \fill[black,opacity=0.3] \convexpath{x2,x3,y4,y3}{1.5em}; 
   \fill[black,opacity=0.3] \convexpath{x4,x5,y6,y5,y4}{1.5em};
   \end{scope}
\end{tikzpicture}
	\caption{Example of a chained complete bigraph $G=(X,Y,E)$, where $X$ is represented by the red vertices $x_i$ and $Y$ by the blue vertices $y_i$. The highlighted areas represent the vertex sets of $\mathscr{C}$.}
	\label{ap-rpibg}
\end{figure}
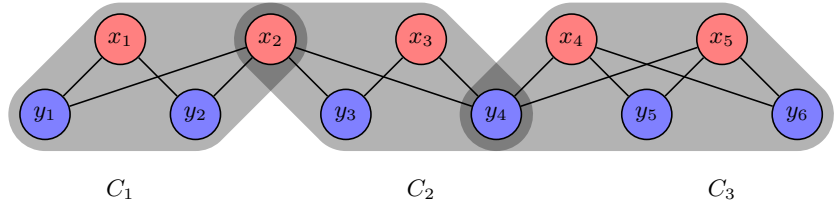
\begin{figure}[H]
	\centering
	\begin{adjustbox}{scale=1}
\begin{tikzpicture}
    \node[draw,minimum height=0.0cm,minimum width=4cm, fill=red!50] at 
     (2.3,-0.628) {$x_2$};
     
    \node[draw,minimum height=0.0cm,minimum width=2cm, fill=blue!50] at 
     (1,-1.1) {$y_2$};
    \node[draw,minimum height=0.0cm,minimum width=2cm, fill=blue!50] at 
     (0.5,-1.5725) {$y_1$};
    \node[draw,minimum height=0.0cm,minimum width=2cm, fill=red!50] at 
     (0,-2.04) {$x_1$};

    \node[draw,minimum height=0.0cm,minimum width=2cm, fill=blue!50] at 
     (3.5,-1.1) {$y_3$};
    \node[draw,minimum height=0.0cm,minimum width=2cm, fill=red!50] at 
     (4.,-1.5725) {$x_3$};
    \node[draw,minimum height=0.0cm,minimum width=4cm, fill=blue!50] at 
     (5.25,-2.04) {$y_4$};
     
    \node[draw,minimum height=0.0cm,minimum width=2cm, fill=red!50] at 
     (7.5,-0.175) {$x_5$};
    \node[draw,minimum height=0.0cm,minimum width=2cm, fill=red!50] at 
     (7.25,-0.628) {$x_4$};
    \node[draw,minimum height=0.0cm,minimum width=2cm, fill=blue!50] at 
     (7.,-1.1) {$y_6$};
    \node[draw,minimum height=0.0cm,minimum width=2cm, fill=blue!50] at 
     (6.75,-1.5725) {$y_5$};
     
     \draw[->] (-1,-2.25) -- (9,-2.25) node[right] {$\mathbb{R}$};
     
     \draw [fill=black!10, opacity=0.3] (2.25,-2.9) rectangle (-1,0.1);
     \draw [fill=red!10, opacity=0.3] (5.25,-2.9) rectangle (2.25,0.1);
     \draw [fill=black!10, opacity=0.3] (8.5,-2.9) rectangle (5.25,0.1);
     
     \node at (0.75,-2.6) {$C_1$};
     \node at (3.75,-2.6) {$C_2$};
     \node at (6.75,-2.6) {$C_3$};
    \end{tikzpicture}
\end{adjustbox}
	\caption{Interval representation of the chained complete bigraph of \cref{ap-rpibg}.}
	\label{ap-rpibg2}
\end{figure}
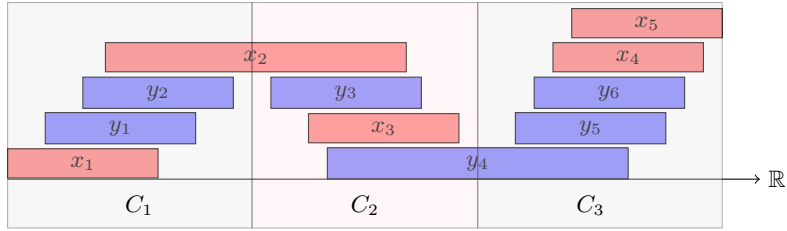

\begin{remark}
	By the definition of chained complete bigraph $G=(X,Y,E)$ we have 
	$\forall_{v \in X\cup Y}~ 1 \leq |\{C_i \in \mathscr{C} ~|~ v \in C_i\}| \leq 2.$
\end{remark}

\begin{remark}\label{astrtrip}
	By the definition of MCB-components $\mathscr{C}$ of a chained complete bigraph 
	$\forall_{v \in X \cup Y}~ N(v) \subseteq \bigcup\limits_{C_j \in \{C_i \in \mathscr{C} ~|~ v \in C_i\}}C_j.$
\end{remark}

Let $G = (X,Y,E)$ be a chained complete bigraph with corresponding MCB-component family $\mathscr{C} = \{C_1, \dots, C_n\}$. Then we have that:
\begin{align*}
I(G) &= \sum\limits_{i=1}^{n}|X_i|\cdot|Y_i| + (|X_i| \mod 2) \cdot (|Y_i|\mod 2) \\
&- \Bigg(\sum\limits_{i=1}^{n-1}g(s_i,C_i) + g(s_i,C_{i+1})\Bigg) + \Bigg(\sum\limits_{i=1}^{n-1}|g(s_i,C_i) - g(s_i,C_{i+1})|\Bigg),
\end{align*}
where $X_i$, $Y_i$, $s_i$, and the function $g$ are defined below. 
First, we introduce additional definitions that are required to understand the proof.
\begin{definition}
	Let $G = (X, Y, E)$ be a chained complete bigraph with corresponding MCB-component family $\mathscr{C} = \{C_1, \dots, C_n\}$. We shall use the notation $G[C_i] = (X_i, Y_i, E_i)$ to denote the graph induced on $C_i \in \mathscr{C}.$
\end{definition}
\begin{definition}
	Let $G$ be a chained complete bigraph with corresponding MCB-component family $\mathscr{C} = \{C_1, \dots, C_n\}$. We label the overlapping vertices of $\mathscr{C}$ as $s_i \in C_i \cap C_{i+1},$
	where $1 \leq i \leq n-1$. By the definition of chained complete bigraph, the overlapping vertex $s_i$ is unique. We define $S = \{s_i ~|~ 1 \leq i \leq n-1\}.$
\end{definition}

\begin{definition}
	We define $g(s_i, C_j)$, where $s_i \in S$ and $C_j \in \mathscr{C}$, to be the number of neighbors of $s_i$ in $C_j$. Equivalently, $g(s_i, C_j)$ is the number of vertices in $C_j$ that do not belong to the same part as $s_i$. That is,
	$g(s_i, C_j) = |N(s_i) \cap C_j| = \begin{cases}
	|X_j| & \text{if } s_i \in Y\\
	|Y_j| & \text{if } s_i \in X
	\end{cases}.$
\end{definition}
\begin{figure}[H]
	\centering
	\resizebox{\textwidth}{!}{%
\begin{tikzpicture}[-,semithick]
  \tikzset{XS/.append style={fill=red!15,draw=black,text=black,shape=circle}}
  \node[XS]         (x2) at (1,1) {$s_1$};
  
  \tikzset{YS/.append style={fill=blue!15,draw=black,text=black,shape=circle}}
  \node[YS]         (y4) at (4,0) {$s_2$};
  
  \tikzset{Y/.append style={fill=blue!50,draw=black,text=black,shape=circle}}
  \node[Y]         (y1) at (-2,0) {$y_1$};
  \node[Y]         (y2) at (0,0) {$y_2$};
  \node[Y]         (y3) at (2,0) {$y_3$};
  \node[Y]         (y5) at (6,0) {$y_5$};
  \node[Y]         (y6) at (8,0) {$y_6$};
  
  \tikzset{X/.append style={fill=red!50,draw=black,text=black,shape=circle}}
  \node[X]         (x1) at (-1,1) {$x_1$};
  \node[X]         (x3) at (3,1) {$x_3$};
  \node[X]         (x4) at (5,1) {$x_4$};
  \node[X]         (x5) at (7,1) {$x_5$};

  \tikzset{t/.append style={fill=white,draw=white,text=black}}
  \node[t]         at (-1,-1) {$C_1$};
  \node[t]         at (3,-1) {$C_2$};
  \node[t]         at (7,-1) {$C_3$};
  \node[t]         at (0,-1.5) {$G[C_1] = (X_1,Y_1, E_1) = (\{x_1,s_1\}, \{y_1,y_2\}, X_1 \times Y_1)$};
  \node[t]         at (0,-2) {$G[C_2] = (X_2,Y_2, E_2) = (\{s_1,x_3\}, \{y_3,s_2\}, X_2 \times Y_2)$};
  \node[t]         at (0.3,-2.5) {$G[C_3] = (X_3,Y_3, E_3) = (\{x_4,x_5\}, \{s_2,y_5,y_6\}, X_3 \times Y_3)$};
  \node[t]         at (7.5,-1.5) {$g(s_1, C_1)=|Y_1|=2$};
  \node[t]         at (7.5,-2) {$g(s_1, C_2)=|Y_2|=2$};
  \node[t]         at (7.5,-2.5) {$g(s_2, C_2)=|X_2|=2$};
  \node[t]         at (7.5,-3) {$g(s_2, C_3)=|X_3|=2$};

  \path (x1) edge              node {} (y1)
             edge              node {} (y2)
        (x2) edge              node {} (y1)
             edge              node {} (y2)
             edge              node {} (y3)
             edge              node {} (y4)
        (x3) edge              node {} (y3)
             edge              node {} (y4)
        (x4) edge              node {} (y5)
             edge              node {} (y6)
             edge              node {} (y4)
        (x5) edge              node {} (y5)
             edge              node {} (y6)
             edge              node {} (y4);
   \begin{scope}[on background layer]     
   \fill[black,opacity=0.3] \convexpath{x1,x2,y2,y1}{1.5em}; 
   \fill[black,opacity=0.3] \convexpath{x2,x3,y4,y3}{1.5em}; 
   \fill[black,opacity=0.3] \convexpath{x4,x5,y6,y5,y4}{1.5em};
   \end{scope}
\end{tikzpicture}}%
	\caption{Illustration of the additional definitions of \cref{c4} on the example graph of \cref{ap-rpibg}.}
	\label{ap-additionaldefsrpig}
\end{figure}

\subsection{Proof of the upper bound}
\begin{lemma}\label{rpigimbaux1}
	Let $G = (X,Y,E)$ be a chained complete bigraph with corresponding MCB-component family $\mathscr{C} = \{C_1, \dots, C_n\}$. Let $C_i \in \mathscr{C}$ such that $|X_i| = 1$ or $|Y_i| = 1$. W.l.o.g. assume that $|X_i| = 1$, then neither overlapping vertices $s_{i-1}$ nor $s_i$ can be in $X_i$. Formally, $\forall_{C_i \in \mathscr{C}}(|X_i| = 1 \implies s_{i-1} \notin X_i \wedge s_{i} \notin X_i) ~\wedge (|Y_i| = 1 \implies s_{i-1} \notin Y_i \wedge s_{i} \notin Y_i).$
\end{lemma}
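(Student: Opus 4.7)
The plan is to argue by contradiction. I would fix an arbitrary $C_i$ with $|X_i|=1$ and suppose that one of the overlapping vertices, say $s_i$, lies in $X_i$; the case $s_{i-1}\in X_i$ is completely symmetric (use $C_{i-1}$ in place of $C_{i+1}$), and the statement for $|Y_i|=1$ is obtained by swapping the roles of the two parts. Also, the boundary cases $i=1$ and $i=n$ are automatic because the missing overlapping vertex makes the corresponding conjunct vacuous.

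Under the assumption $s_i \in X_i$, the hypothesis $|X_i|=1$ forces $X_i=\{s_i\}$. The first key step is to note that $s_i \in C_i \cap C_{i+1}$ together with $s_i \in X$ implies $s_i \in X_{i+1}$ as well, so by the completeness of $G[C_{i+1}]$ the vertex $s_i$ is adjacent to every vertex in $Y_{i+1}$. The second step is to exhibit a vertex that can be appended to $C_i$: by \cref{restpibigraphdef} the only vertex shared by $C_i$ and $C_{i+1}$ is $s_i$, and $s_i\notin Y_{i+1}$; since both parts of any MCB-component are nonempty (each $C_j$ must contain at least one edge), there exists $y \in Y_{i+1}$, and this $y$ lies outside $C_i$.

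The final step is to check that $G[C_i\cup\{y\}]$ is itself a complete bipartite graph with parts $\{s_i\}$ and $Y_i\cup\{y\}$: the unique $X$-vertex $s_i$ is joined to all of $Y_i$ (from $G[C_i]$) and to $y$ (from $G[C_{i+1}]$), while $Y_i\cup\{y\}\subseteq Y$ carries no internal edges. This directly contradicts the maximality clause in the definition of MCB-components, completing the proof.

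I do not expect any analytic difficulty; the only point that warrants care is the side-bookkeeping, namely that $s_i \in X$ places $s_i$ in $X_{i+1}$ rather than in $Y_{i+1}$, and that the enlarged set $C_i\cup\{y\}$ really is a complete bipartite graph, which needs both the presence of every $s_i$-to-$(Y_i\cup\{y\})$ edge and the absence of any edge inside $Y_i\cup\{y\}$.
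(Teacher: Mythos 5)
Your proposal is correct and follows essentially the same route as the paper: assume $X_i=\{s_i\}$, observe that $s_i\in X_{i+1}$ is adjacent to all of $Y_{i+1}$ by completeness of $G[C_{i+1}]$, and derive a contradiction with the maximality of $C_i$ (the paper adjoins all of $Y_{i+1}$, you adjoin a single $y\in Y_{i+1}$, which matches the single-vertex maximality clause in the definition even more directly). The additional bookkeeping you supply — symmetry of the cases, vacuity at $i\in\{1,n\}$, and the absence of edges inside $Y_i\cup\{y\}$ — is all sound.
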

\begin{proof}
	Trivial proof by contradiction. (If not, then $C_i$ is not maximal.)
\end{proof}

\begin{lemma}\label{rpigimb1}
	Given a chained complete bigraph $G = (X,Y,E)$ with corresponding MCB-component family $\mathscr{C} = \{C_1, \dots, C_n\}$, we have that
	\begin{align*}
	I(G) &\leq \sum\limits_{i=1}^{n}|X_i|\cdot|Y_i| + (|X_i| \mod 2) \cdot (|Y_i|\mod 2) \\
	&- \Bigg(\sum\limits_{i=1}^{n-1}g(s_i,C_i) + g(s_i,C_{i+1})\Bigg) + \Bigg(\sum\limits_{i=1}^{n-1}|g(s_i,C_i) - g(s_i,C_{i+1})|\Bigg).
	\end{align*}
\end{lemma}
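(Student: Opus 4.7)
The plan is to explicitly construct an ordering $\sigma_{X \cup Y}$ whose imbalance equals the right-hand side of the claimed inequality, by concatenating suitably chosen local orderings of the MCB-components. For each $C_i \in \mathscr{C}$ I would exhibit a local ordering $\tau_i$ of the vertices of $C_i$ with three properties: (a) $\tau_i$ is imbalance-optimal on $G[C_i]$, so that $I(\tau_i, G[C_i]) = m_i := |X_i|\cdot|Y_i| + (|X_i|\mod 2)(|Y_i|\mod 2)$ by \cref{thm1}; (b) if $i \geq 2$, then $s_{i-1}$ occupies the first position of $\tau_i$; and (c) if $i \leq n-1$, then $s_i$ occupies the last position of $\tau_i$. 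The global ordering is then defined as the concatenation $\tau_1 \tau_2' \tau_3' \cdots \tau_n'$ where, for $i \geq 2$, $\tau_i'$ denotes $\tau_i$ with its first entry deleted, so that each overlapping vertex $s_i$ appears exactly once in $\sigma_{X \cup Y}$ and sits at the boundary between the $C_i$-block and the $C_{i+1}$-block.

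Next I would evaluate $I(\sigma_{X \cup Y})$. By \cref{astrtrip}, every non-overlapping vertex $v \in C_i$ has all of its neighbors inside $C_i$, so its contribution to $I(\sigma_{X \cup Y})$ coincides with its contribution $I(v, \tau_i, G[C_i])$ inside the local ordering. For an overlapping vertex $s_i$, property (c) applied to $\tau_i$ places all $g(s_i, C_i)$ of its $C_i$-neighbors strictly to its left in $\sigma_{X \cup Y}$, while property (b) applied to $\tau_{i+1}$ places all $g(s_i, C_{i+1})$ of its $C_{i+1}$-neighbors strictly to its right; hence $s_i$ contributes exactly $|g(s_i, C_i) - g(s_i, C_{i+1})|$. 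Using the decomposition
\[
I(\tau_i, G[C_i]) = \sum_{v \in C_i \setminus \{s_{i-1}, s_i\}} I(v, \tau_i, G[C_i]) + g(s_{i-1}, C_i) + g(s_i, C_i),
\]
where the last two terms appear because $s_{i-1}$ and $s_i$ sit at the endpoints of $\tau_i$ (and are dropped when they do not exist), and summing over $i$, a straightforward telescoping yields
\[
I(\sigma_{X \cup Y}) = \sum_{i=1}^{n} m_i - \sum_{i=1}^{n-1}\bigl(g(s_i, C_i)+g(s_i, C_{i+1})\bigr) + \sum_{i=1}^{n-1}\bigl|g(s_i, C_i)-g(s_i, C_{i+1})\bigr|,
\]
which together with $I(G) \leq I(\sigma_{X \cup Y})$ gives the claimed upper bound.

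The main obstacle is establishing the existence of a local ordering $\tau_i$ satisfying (a), (b), and (c) for every parity configuration of $(|X_i|,|Y_i|)$ and every assignment of $s_{i-1}$ and $s_i$ to the two parts of $C_i$. When the designated endpoint vertices lie in the even-sized ``bread'' of the sandwich construction of \cref{lmacbg1}, one simply places $s_{i-1}$ and $s_i$ at the left- and right-most positions of that sandwich, and the imbalance analysis of that lemma carries over unchanged. The delicate situations are when an overlapping vertex lies in the odd-sized ``filling'' part of the sandwich, or when both $|X_i|$ and $|Y_i|$ are odd so that the pseudo-sandwich of \cref{compbigraphoddodd} must be used. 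Here \cref{rpigimbaux1} is crucial: it rules out the degenerate case where an overlapping vertex would have to be placed in a part of size one. In the remaining configurations one can exhibit an explicit ``interleaved'' ordering of $C_i$ that pins the required endpoints in place while still attaining $m_i$, and verify optimality by a direct per-vertex imbalance count that mirrors the one performed in \cref{lmacbg1}.
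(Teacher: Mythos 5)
Your proposal is correct and follows essentially the same route as the paper: endpoint-pinned imbalance-optimal suborderings for each $G[C_i]$ (handled by a parity/membership case analysis, with \cref{rpigimbaux1} excluding the degenerate size-one configurations), concatenated so each $s_i$ sits once at the block boundary, followed by the same accounting in which non-overlapping vertices keep their local imbalance and each $s_i$ contributes $|g(s_i,C_i)-g(s_i,C_{i+1})|$ in place of the doubly-counted $g(s_i,C_i)+g(s_i,C_{i+1})$. The paper's appendix carries out the explicit case-by-case construction of the local orderings that you defer, but the decomposition and telescoping are identical.
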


\begin{proof}
	The lemma is proven by constructing an ordering $\sigma_{X \cup Y}$ whose imbalance is equivalent to the above expression. The ordering $\sigma_{X \cup Y}$ is constructed by creating a subordering for each $C_i \in \mathscr{C}$ separately and concatenating those suborderings. The suborderings are created in a similar fashion as the orderings in the proof of \cref{lmacbg1}.
\end{proof}

\subsection{Proof of the lower bound}

\begin{remark}\label{ordanalysisrem3}
	The imbalance of $v \in (X \cup Y) \setminus S$ is only influenced by the vertices in $C_i$. That is,
	$\forall_{C_i \in \mathscr{C}} ~\forall_{v \in C_i \setminus S} ~I(v, \sigma_{X\cup Y}, G) = I(v, \sigma_{X\cup Y}^{C_i}, G[C_i]).$
\end{remark}
\begin{remark}\label{ordanalysisrem4}
	The imbalance of $s_i$ is only influenced by the vertices in $C_i \cup C_{i+1}$. That is,
	$\forall_{s_i \in S} ~I(s_{i}, \sigma_{X\cup Y}, G) = I(s_i, \sigma_{X\cup Y}^{C_{i} \cup C_{i+1}}, G[C_i \cup C_{i+1}]).$
\end{remark}

\begin{lemma}\label{rpigimbaux}
	Let $G=(X,Y,E)$ be a chained complete bigraph with corresponding MCB-component family $\mathscr{C} = \{C_1, \dots, C_n\}$. For any arbitrary ordering $\sigma_{X\cup Y}$ it holds that
	\begin{align*}
	\MoveEqLeft I(s_{n-1}, \sigma_{X \cup Y}^{C_{n-1} \cup C_n}, G[C_{n-1} \cup C_n])
		- I(s_{n-1}, \sigma_{X \cup Y}^{C_{n-1}}, G[C_{n-1}]) \\
	\MoveEqLeft	- I(s_{n-1}, \sigma_{X \cup Y}^{C_{n}}, G[C_{n}]) \\
	&\geq |g(s_{n-1},C_{n-1}) - g(s_{n-1},C_{n})| - g(s_{n-1},C_{n-1}) - g(s_{n-1},C_{n}).
	\end{align*}
\end{lemma}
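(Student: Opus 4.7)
My plan is to expand each of the three imbalance terms into a small set of integer variables and reduce the claim to a pure absolute-value inequality. Let me introduce
\begin{align*}
a &= |\{u \in N(s_{n-1}) \cap C_{n-1} \mid u <_{\sigma_{X \cup Y}} s_{n-1}\}|,\\
b &= |\{u \in N(s_{n-1}) \cap C_{n-1} \mid u >_{\sigma_{X \cup Y}} s_{n-1}\}|,
\end{align*}
and analogously $c, d$ for the neighbors of $s_{n-1}$ inside $C_n$. By \cref{astrtrip} combined with the chained complete bigraph structure, $N(s_{n-1}) \subseteq C_{n-1}\cup C_n$ and $C_{n-1}\cap C_n = \{s_{n-1}\}$, so these four counts are disjoint and exhaust $N(s_{n-1})$; in particular $a+b = g(s_{n-1}, C_{n-1})$ and $c+d = g(s_{n-1}, C_n)$. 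The three imbalance quantities in the statement become $|a+c-b-d|$, $|a-b|$, and $|c-d|$, respectively.

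Next, I would set $P = a-b$, $Q = c-d$, $R = a+b$, $S = c+d$. Since $R, S \geq 0$, the identity $|R-S| - R - S = -2\min(R,S)$ holds, so the inequality to prove reduces to
$$|P+Q| - |P| - |Q| \geq -2\min(R,S).$$

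The remaining step is to apply the reverse triangle inequality $|P+Q| \geq \bigl||P| - |Q|\bigr|$, which immediately yields $|P+Q| - |P| - |Q| \geq -2\min(|P|, |Q|)$, and then to show $\min(|P|,|Q|) \leq \min(R,S)$. For the last inequality, $|P| = |a-b| \leq a+b = R$ and $|Q| = |c-d| \leq c+d = S$; assuming without loss of generality that $|P| \leq |Q|$, we get $|P| \leq R$ directly, and $|P| \leq |Q| \leq S$, hence $|P| \leq \min(R,S)$, which closes the argument.

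The main obstacle is simply spotting the correct reduction: the nested absolute values on both sides make the inequality look deceptively complex, and without decomposing along the four counts $a,b,c,d$ it is easy to get lost. Once $P, Q, R, S$ are introduced and the closed form $|R-S|-R-S = -2\min(R,S)$ is noted, the statement collapses to one application of the reverse triangle inequality combined with the elementary bound $|x-y|\leq x+y$ for nonnegative $x,y$.
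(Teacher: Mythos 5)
Your proof is correct. You use the same decomposition as the paper — the four counts $a,b,c,d$ are exactly the paper's $l_1,r_1,l_2,r_2$, and both proofs reduce the claim to showing $|P+Q|-|P|-|Q| \geq |R-S|-R-S$ with $P=a-b$, $Q=c-d$, $R=a+b$, $S=c+d$. Where you diverge is in how that reduced inequality is closed: the paper runs an explicit case analysis on the signs of $P$ and $Q$ (four cases, with the mixed-sign cases further split on the sign of $P+Q$, each evaluated against the two possible values of the right-hand side), whereas you observe that $|R-S|-R-S=-2\min(R,S)$, apply the reverse triangle inequality to get $|P+Q|-|P|-|Q|\geq -2\min(|P|,|Q|)$, and finish with the one-line monotonicity $\min(|P|,|Q|)\leq\min(R,S)$. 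Your route buys a shorter, case-free argument whose intermediate quantities ($-2\min$ of the overlaps on each side) make it transparent why the bound is tight exactly when the smaller side is fully imbalanced; the paper's case analysis is more mechanical but makes each scenario explicit, which is closer in spirit to how the equality case is realized by the construction in the upper-bound proof.
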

\begin{proof}
	The expression
	$|g(s_{n-1},C_{n-1}) - g(s_{n-1},C_{n})| - g(s_{n-1},C_{n-1}) - g(s_{n-1},C_{n})$
	takes two possible values depending on the sign of $g(s_{n-1},C_{n-1}) - g(s_{n-1},C_{n})$. Either the above expression is equivalent to
	$-2 \cdot g(s_{n-1},C_{n})$
	or
	$-2 \cdot g(s_{n-1},C_{n-1}).$
	To relate the expressions in the inequality, we shall denote the number of neighbors of $s_{n-1}$ to its left and to its right in $C_n$ and $C_{n-1}$ in ordering $\sigma_{X \cup Y}$ as:
	
	\begin{tabular}{ l   }
		$l_1 = |\{v \in C_{n-1} \cap N(s_{n-1}) ~|~ v <_{\sigma_{X \cup Y}} s_{n-1}\}|$; \\ 
		$l_2 = |\{v \in C_{n} \cap N(s_{n-1}) ~|~ v <_{\sigma_{X \cup Y}} s_{n-1}\}|$; \\ 
		$r_1 = |\{v \in C_{n-1} \cap N(s_{n-1}) ~|~ v >_{\sigma_{X \cup Y}} s_{n-1}\}|$; \\ 
		$r_2 = |\{v \in C_{n} \cap N(s_{n-1}) ~|~ v >_{\sigma_{X \cup Y}} s_{n-1}\}|.$    
	\end{tabular}\\
	Using the above definitions, we rewrite the expression on the left-hand side of the inequality as follows:
	\begin{align*}
	\MoveEqLeft I(s_{n-1}, \sigma_{X \cup Y}^{C_{n-1} \cup C_n}, G[C_{n-1} \cup C_n]) - I(s_{n-1}, \sigma_{X \cup Y}^{C_{n-1}}, G[C_{n-1}]) \\
	\MoveEqLeft - I(s_{n-1}, \sigma_{X \cup Y}^{C_{n}}, G[C_{n}]) \\
	&= |l_1 - r_1 + l_2 - r_2| - |l_1 - r_1| - |l_2 - r_2|.
	\end{align*}
	According to the signs of $l_1 - r_1$ and $l_2 - r_2$, consider the following cases:
	
	\begin{tabular}{ l l l l }
		$(++)$ & $l_1 - r_1 \geq 0$ and $l_2 - r_2 \geq 0$; & $(+-)$ & $l_1 - r_1 \geq 0$ and $l_2 - r_2 < 0$; \\ 
		$(-+)$ & $l_1 - r_1 < 0$ and $l_2 - r_2 \geq 0$; & $(--)$ & $l_1 - r_1 < 0$ and $l_2 - r_2 < 0$.
	\end{tabular}\\
	For the cases $(++)$ and $(--)$, we have that
	$|l_1 - r_1 + l_2 - r_2| - |l_1 - r_1| - |l_2 - r_2| = 0.$
	Thus, in the cases $(++)$ and $(--)$, it holds that
	\begin{eqnarray*}
		\lefteqn{|l_1 - r_1 + l_2 - r_2| - |l_1 - r_1| - |l_2 - r_2| = 0} \\ 
		& \geq & |g(s_{n-1},C_{n-1}) - g(s_{n-1},C_{n})| - g(s_{n-1},C_{n-1}) - g(s_{n-1},C_{n}).
	\end{eqnarray*}
	This follows from the fact that $-2 \cdot g(s_{n-1},C_{n}) \leq 0$
	and
	$-2 \cdot g(s_{n-1},C_{n-1}) \leq 0.$
	For the cases $(+-)$ and $(-+)$, we have that
	$$
	|l_1 - r_1 + l_2 - r_2| - |l_1 - r_1| - |l_2 - r_2|=
	\begin{cases}
	2 (l_2 - r_2) & \text{if } (+-) \wedge l_1 - r_1 + l_2 - r_2 \geq 0 \\
	2 (r_2 - l_2) & \text{if } (-+) \wedge l_1 - r_1 + l_2 - r_2 < 0 \\
	2 (r_1 - l_1) & \text{if } (+-) \wedge l_1 - r_1 + l_2 - r_2 < 0 \\
	2 (l_1 - r_1) & \text{if } (-+) \wedge l_1 - r_1 + l_2 - r_2 \geq 0 
	\end{cases}.
	$$
	Observe that, by the definitions of $l_1$, $r_1$, $l_2$, $r_2$, and function $g$, we have\\
	$l_1 + r_1 = g(s_{n-1}, C_{n-1})$
	and 
	$l_2 + r_2 = g(s_{n-1}, C_{n}).$
	Using the above remark and case distinction, we derive that 
	in the cases $(+-)$ and $(-+)$ it 
	holds that 
	\begin{align*}
	\lefteqn{|l_1 - r_1 + l_2 - r_2| - |l_1 - r_1| - |l_2 - r_2|}\\
	& \geq |g(s_{n-1},C_{n-1}) - g(s_{n-1},C_{n})| - g(s_{n-1},C_{n-1}) - g(s_{n-1},C_{n}). 
	\end{align*}
\end{proof}
\begin{lemma}\label{rpigimb2}
	Given a chained complete bigraph $G = (X,Y,E)$ with corresponding MCB-component family $\mathscr{C} = \{C_1, \dots, C_n\}$, we have that
	\begin{align*}
	I(G) &\geq \sum\limits_{i=1}^{n}|X_i|\cdot|Y_i| + (|X_i| \mod 2) \cdot (|Y_i|\mod 2) \\
	&- \Bigg(\sum\limits_{i=1}^{n-1}g(s_i,C_i) + g(s_i,C_{i+1})\Bigg) + \Bigg(\sum\limits_{i=1}^{n-1}|g(s_i,C_i) - g(s_i,C_{i+1})|\Bigg).
	\end{align*}
\end{lemma}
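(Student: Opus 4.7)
The plan is to take an arbitrary ordering $\sigma_{X \cup Y}$ and decompose its imbalance into contributions that can be lower-bounded by \cref{thm1} and \cref{rpigimbaux}. Since the bound will hold for every ordering, it will hold for the optimal one.

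First I would split the sum $I(\sigma_{X \cup Y}, G) = \sum_{v \in X \cup Y} I(v, \sigma_{X \cup Y}, G)$ into the contribution of non-overlapping vertices and the contribution of overlapping vertices. By \cref{ordanalysisrem3}, for a non-overlapping vertex $v \in C_i \setminus S$ we have $I(v, \sigma_{X \cup Y}, G) = I(v, \sigma_{X \cup Y}^{C_i}, G[C_i])$, and by \cref{ordanalysisrem4}, for an overlapping vertex $s_i \in S$ we have $I(s_i, \sigma_{X \cup Y}, G) = I(s_i, \sigma_{X \cup Y}^{C_i \cup C_{i+1}}, G[C_i \cup C_{i+1}])$.

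Next I would regroup these contributions per MCB-component. For each $C_i$, the quantity $I(\sigma_{X\cup Y}^{C_i}, G[C_i])$ equals the sum of $I(v, \sigma_{X \cup Y}^{C_i}, G[C_i])$ over $v \in C_i \setminus S$ plus the single-component imbalances of those overlapping vertices of $S$ that lie in $C_i$. Solving for the non-overlapping contribution and summing over $i$ yields
\begin{align*}
I(\sigma_{X \cup Y}, G) &= \sum_{i=1}^{n} I(\sigma_{X\cup Y}^{C_i}, G[C_i]) \\
&+ \sum_{i=1}^{n-1}\Bigl[I(s_i, \sigma_{X \cup Y}^{C_i \cup C_{i+1}}, G[C_i \cup C_{i+1}])
- I(s_i, \sigma_{X \cup Y}^{C_i}, G[C_i]) - I(s_i, \sigma_{X \cup Y}^{C_{i+1}}, G[C_{i+1}])\Bigr].
\end{align*}
The bracketed term has exactly the shape handled by \cref{rpigimbaux} (the lemma is stated for $s_{n-1}$ but its argument is purely local to a pair of consecutive components and applies verbatim to every $s_i$).

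Then I would apply \cref{thm1} to each $I(\sigma_{X\cup Y}^{C_i}, G[C_i])$, giving the lower bound $|X_i|\cdot|Y_i| + (|X_i| \bmod 2)(|Y_i| \bmod 2)$, and apply \cref{rpigimbaux} to each bracket, giving the lower bound $|g(s_i,C_i) - g(s_i,C_{i+1})| - g(s_i,C_i) - g(s_i,C_{i+1})$. Summing these two families of bounds recovers the desired right-hand side, and since the inequality holds for an arbitrary ordering, it holds for the minimum, establishing the claim.

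The main obstacle is the bookkeeping in the decomposition step: an overlapping vertex $s_i$ must be counted once overall, but once we group by component it naturally appears in both $G[C_i]$ and $G[C_{i+1}]$. The correction term that subtracts the two single-component imbalances and adds back the joint imbalance is precisely what \cref{rpigimbaux} was designed to bound, so once the algebra of the decomposition is written down correctly the remainder is a direct substitution.
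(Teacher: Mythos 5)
Your proposal is correct, but it reaches the bound by a genuinely different route than the paper. The paper proves \cref{rpigimb2} by induction on $n=|\mathscr{C}|$: it peels off the last component, writes $I(\sigma_{X\cup Y})$ as $I(\sigma_{X\cup Y}^{\mathscr{C}\setminus C_{k+1}}, G[\mathscr{C}\setminus C_{k+1}]) + I(\sigma_{X\cup Y}^{C_{k+1}}, G[C_{k+1}])$ plus the three-term correction for the single overlapping vertex $s_k$, invokes the induction hypothesis on the first $k$ components and \cref{thm1} on the last one, and then applies \cref{rpigimbaux} literally as stated (for $s_k$, which is the last overlapping vertex of the truncated graph, so no generalization of the lemma is needed). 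You instead unroll this into a single global identity: regrouping the vertex-wise sum by component produces $\sum_{i=1}^{n} I(\sigma_{X\cup Y}^{C_i}, G[C_i])$ plus one correction bracket per overlapping vertex, after which \cref{thm1} and \cref{rpigimbaux} are applied termwise. Your regrouping identity checks out (each $s_i$ lies in exactly $C_i$ and $C_{i+1}$, so subtracting the two single-component imbalances and adding the joint one counts it exactly once), and your observation that \cref{rpigimbaux} applies verbatim to every consecutive pair, not just the last, is accurate since its proof is purely local. What your version buys is the avoidance of induction altogether — in particular you never need the (true but unstated) fact that $G[\mathscr{C}\setminus C_{k+1}]$ is again a chained complete bigraph whose MCB-component family is exactly $\{C_1,\dots,C_k\}$, which the paper's induction step silently relies on. It is also pleasingly symmetric with the paper's own proof of the upper bound in \cref{rpigimb1}, which uses precisely this global decomposition for the constructed ordering; your proof shows the same decomposition yields the matching lower bound for an arbitrary ordering. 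What the paper's induction buys in exchange is lighter bookkeeping: only one correction term has to be analyzed per step, and the index gymnastics of the double-counted overlapping vertices are absorbed into the induction hypothesis.
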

\begin{proof}
	We shall prove that the imbalance of any arbitrary ordering $\sigma_{X \cup Y}$ on the vertex set $X \cup Y$ is bounded from below by the above expression by induction on $|\mathscr{C}| = n$.
	\begin{itemize}[label=$\bullet$]
		\item Base Case $(n = 0 \vee n =1)$:\\
		By the definition of MCB-components $\mathscr{C}$, the graph $G$ is an empty graph or a complete bigraph. Thus, by \cref{thm1}, the lemma holds for the base case.
		\item Induction step $(n>1)$:\\
		Let $G = (X,Y,E)$ be a chained complete bigraph with corresponding MCB-component family $\mathscr{C} = \{C_1, \dots, C_{k+1}\}$. Let us define $$\mathscr{C} \setminus C_{k+1} = \bigcup\limits_{C_i \in \mathscr{C}\setminus \{C_{k+1}\}}C_i.$$
		We write the imbalance of $\sigma_{X \cup Y}$ as follows:
		\begin{align}
		I(\sigma_{X \cup Y}) & = I(\sigma_{X\cup Y}^{\mathscr{C} \setminus C_{k+1}}, G[\mathscr{C} \setminus C_{k+1}]) + I(\sigma_{X\cup Y}^{C_{k+1}}, G[C_{k+1}])\notag\\
		& - I(s_{k}, \sigma_{X \cup Y}^{C_{k}}, G[C_{k}]) - I(s_{k}, \sigma_{X \cup Y}^{C_{k+1}}, G[C_{k+1}]) \notag\\
		&+ I(s_{k}, \sigma_{X \cup Y}^{C_{k} \cup C_{k+1}}, G[C_{k} \cup C_{k+1}]) \label{eqlem7-0}\\
		&\geq\Bigg(\sum\limits_{i=1}^{k}|X_i|\cdot|Y_i| + (|X_i| \mod 2) \cdot (|Y_i|\mod 2)\Bigg)\notag\\
		&- \Bigg(\sum\limits_{i=1}^{k-1}g(s_i,C_i) + g(s_i,C_{i+1})\Bigg) + \Bigg(\sum\limits_{i=1}^{k-1}|g(s_i,C_i) - g(s_i,C_{i+1})|\Bigg) \notag\\
		&+ |X_{k+1}|\cdot|Y_{k+1}| + (|X_{k+1}| \mod 2) \cdot (|Y_{k+1}|\mod 2) \notag\\
		&- I(s_{k}, \sigma_{X \cup Y}^{C_{k}}, G[C_{k}])  - I(s_{k}, \sigma_{X \cup Y}^{C_{k+1}}, G[C_{k+1}]) \notag\\ 
		&+ I(s_{k}, \sigma_{X \cup Y}^{C_{k} \cup C_{k+1}}, G[C_{k} \cup C_{k+1}]) \label{eqlem7-1}\\
		& \geq\Bigg(\sum\limits_{i=1}^{k+1}|X_i|\cdot|Y_i| + (|X_i| \mod 2) \cdot (|Y_i|\mod 2)\Bigg) \notag\\
		& - \Bigg(\sum\limits_{i=1}^{k-1}g(s_i,C_i) + g(s_i,C_{i+1})\Bigg) + \Bigg(\sum\limits_{i=1}^{k-1}|g(s_i,C_i) - g(s_i,C_{i+1})|\Bigg)\notag\\
		&- g(s_{k}, C_{k+1}) - g(s_{k}, C_{k}) + |g(s_{k}, C_{k}) - g(s_{k}, C_{k+1})| \label{eqlem7-2}\\
		&= \Bigg(\sum\limits_{i=1}^{k+1}|X_i|\cdot|Y_i| + (|X_i| \mod 2) \cdot (|Y_i|\mod 2)\Bigg)\notag\\
		&- \Bigg(\sum\limits_{i=1}^{k}g(s_i,C_i) + g(s_i,C_{i+1})\Bigg) + \Bigg(\sum\limits_{i=1}^{k}|g(s_i,C_i) - g(s_i,C_{i+1})|\Bigg), \notag
		\end{align}
		where \cref{eqlem7-0} follows from \cref{ordanalysisrem3} and \cref{ordanalysisrem4}, \cref{eqlem7-1} follows from the induction hypothesis, and \cref{eqlem7-2} follows from \cref{rpigimbaux}. 
	\end{itemize}
\end{proof}
\begin{theorem}\label{thm2}
	Let $G = (X,Y,E)$ be a chained complete bigraph with corresponding MCB-component family $\mathscr{C} = \{C_1, \dots, C_n\}$. We have that
\end{theorem}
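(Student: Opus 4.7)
The plan is to read Theorem~\ref{thm2} as the equality obtained by pinching together the matching upper and lower bounds that have just been proved, namely
\begin{align*}
I(G) = \sum\limits_{i=1}^{n}&|X_i|\cdot|Y_i| + (|X_i| \bmod 2) \cdot (|Y_i|\bmod 2) \\
&- \Bigg(\sum\limits_{i=1}^{n-1}g(s_i,C_i) + g(s_i,C_{i+1})\Bigg) + \Bigg(\sum\limits_{i=1}^{n-1}|g(s_i,C_i) - g(s_i,C_{i+1})|\Bigg).
\end{align*}
With Lemmas~\ref{rpigimb1} and \ref{rpigimb2} in hand, the proof is essentially a one-line citation: Lemma~\ref{rpigimb1} furnishes a concrete ordering whose imbalance equals the right-hand side, so $I(G)$ is at most this value, and Lemma~\ref{rpigimb2} says every ordering has imbalance at least this value, so $I(G)$ is also at least this value; combining the two inequalities forces equality.

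First I would restate the target identity explicitly (the displayed formula above), to fix notation for the reader since the theorem as stated is just the setup. Then I would invoke Lemma~\ref{rpigimb1} to obtain the $\leq$ direction and Lemma~\ref{rpigimb2} to obtain the $\geq$ direction, and conclude that both sides coincide. No case analysis or new construction is needed at this step because the work has been done in the two preceding subsections.

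There is no real obstacle: the only thing to double-check is that the two lemmas really do bound the same quantity $I(G) = \min_{\sigma} I(\sigma_{X\cup Y}, G)$, and that the ordering produced in Lemma~\ref{rpigimb1} is actually an ordering on $X \cup Y$ (which follows from the concatenation construction there). If I wanted to be slightly more informative, I would also remark in one sentence that as a byproduct the ordering built in the proof of Lemma~\ref{rpigimb1} is imbalance optimal for $G$, and that this yields an $\mathcal{O}(c \cdot \log(|V|) \cdot \log(\log(|V|)))$-time algorithm for $I(G)$ with $c = |\mathscr{C}|$ via Corollary~\ref{thm1.5} applied to each $G[C_i]$, plus $\mathcal{O}(c)$ arithmetic for the correction terms $g(s_i,C_i)+g(s_i,C_{i+1})$ and $|g(s_i,C_i)-g(s_i,C_{i+1})|$, matching the bound advertised in the introduction.
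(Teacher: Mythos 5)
Your proposal matches the paper's own proof exactly: Theorem~\ref{thm2} is obtained by combining the upper bound of Lemma~\ref{rpigimb1} with the lower bound of Lemma~\ref{rpigimb2}, which is precisely what the paper does. The additional remarks on optimality of the constructed ordering and the running time are consistent with Corollary~\ref{thm4} but not needed for the proof itself.
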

\begin{align*}
I(G) &= \sum\limits_{i=1}^{n}|X_i|\cdot|Y_i| + (|X_i| \mod 2) \cdot (|Y_i|\mod 2) \\
&- \Bigg(\sum\limits_{i=1}^{n-1}g(s_i,C_i) + g(s_i,C_{i+1})\Bigg) + \Bigg(\sum\limits_{i=1}^{n-1}|g(s_i,C_i) - g(s_i,C_{i+1})|\Bigg).\\
\end{align*}
\begin{proof}
	Follows from \cref{rpigimb1} and \cref{rpigimb2}.
\end{proof}
\begin{corollary}\label{thm4}
	Let $G = (X,Y,E)$ be a chained complete bigraph with corresponding MCB-component family $\mathscr{C} = \{C_1, \dots, C_n\}$ and let $|X| + |Y| = m$. Given $\{|X_1|, \dots, |X_n|\}$, $\{|Y_1|, \dots, |Y_n|\}$, and $\{s_1 \in X, \dots, s_{n-1} \in X\}$,
	the imbalance of $G$ can be computed in $\mathcal{O}(n \cdot \log(m) \cdot \log(\log(m)))$ time. By applying a similar reasoning as in \cref{thm1.5}, we verify the correctness of this corollary.
\end{corollary}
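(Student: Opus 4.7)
The plan is to directly evaluate the closed-form expression for $I(G)$ given by \cref{thm2}, accounting carefully for the bit-complexity of each arithmetic operation. Since the inputs specify $\{|X_1|,\dots,|X_n|\}$, $\{|Y_1|,\dots,|Y_n|\}$, and the part membership of each $s_i$, every quantity appearing in the formula is either immediately available or computable in a constant number of integer operations.

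First I would iterate over $i=1,\dots,n$ and compute the per-component contribution $|X_i|\cdot|Y_i| + (|X_i|\bmod 2)\cdot(|Y_i|\bmod 2)$. Each such term requires one multiplication, two parity checks, and one addition on integers bounded by $m=|X|+|Y|$; by the cited $\mathcal{O}(k\log k)$ integer multiplication algorithm with $k=\mathcal{O}(\log m)$, each of these operations costs $\mathcal{O}(\log(m)\cdot\log\log(m))$ time. Next, for each $i=1,\dots,n-1$, I would read the part membership of $s_i$ and set $g(s_i,C_i)$ equal to $|Y_i|$ or $|X_i|$ (and analogously for $g(s_i,C_{i+1})$), then compute $g(s_i,C_i)+g(s_i,C_{i+1})$ and $|g(s_i,C_i)-g(s_i,C_{i+1})|$; again this is a constant number of $\mathcal{O}(\log(m)\cdot\log\log(m))$-time arithmetic operations per index. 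Finally, accumulating all $\mathcal{O}(n)$ contributions into a running total (a single integer of bit length $\mathcal{O}(\log m + \log n) = \mathcal{O}(\log m)$ since $n \leq m$) costs $\mathcal{O}(n\cdot\log(m)\cdot\log\log(m))$ in total.

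Summing these costs over the $\mathcal{O}(n)$ summands in the formula of \cref{thm2} yields the claimed $\mathcal{O}(n\cdot\log(m)\cdot\log\log(m))$ bound. There is no real obstacle here beyond bookkeeping: the correctness of the value follows immediately from \cref{thm2}, and the running-time analysis is exactly parallel to that of \cref{thm1.5}, simply multiplied by the number $n$ of MCB-components whose contributions must be aggregated.
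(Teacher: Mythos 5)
Your proposal is correct and matches the paper's own (appendix) argument: both evaluate the closed-form expression of \cref{thm2} term by term, observing that $g$ is a constant-time lookup given the part membership of each $s_i$, that the $n$ multiplications $|X_i|\cdot|Y_i|$ dominate at $\mathcal{O}(\log(m)\cdot\log(\log(m)))$ each via the cited integer-multiplication bound, and that the remaining additions and parity checks are cheaper. No meaningful difference in route.
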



\bibliographystyle{splncs04}
\bibliography{references} 

\begin{thebibliography}{1}
\providecommand{\url}[1]{\texttt{#1}}
\providecommand{\urlprefix}{URL }
\providecommand{\doi}[1]{https://doi.org/#1}

\bibitem{BIEDL200527}
Biedl, T., Chan, T., Ganjali, Y., Hajiaghayi, M.T., Wood, D.R.: Balanced
  vertex-orderings of graphs. Discrete Applied Mathematics  \textbf{148}(1),
  27 -- 48 (2005). \doi{https://doi.org/10.1016/j.dam.2004.12.001},
  \url{http://www.sciencedirect.com/science/article/pii/S0166218X04003828}

\bibitem{diaz2002survey}
D{\'\i}az, J., Petit, J., Serna, M.: A survey of graph layout problems. ACM
  Computing Surveys (CSUR)  \textbf{34}(3),  313--356 (2002)

\bibitem{10.1007/978-3-030-64843-5_52}
Gorzny, J.: Computing imbalance-minimal orderings for bipartite permutation
  graphs and threshold graphs. In: Wu, W., Zhang, Z. (eds.) Combinatorial
  Optimization and Applications. pp. 766--779. Springer International
  Publishing, Cham (2020)

\bibitem{10.1007/978-3-030-26176-4_18}
Gorzny, J., Buss, J.F.: Imbalance, cutwidth, and the structure of optimal
  orderings. In: Du, D.Z., Duan, Z., Tian, C. (eds.) Computing and
  Combinatorics. pp. 219--231. Springer International Publishing, Cham (2019)

\bibitem{10.4007/annals.2021.193.2.4}
Harvey, D., van~der Hoeven, J.: Integer multiplication in time o(n log n).
  Annals of Mathematics  \textbf{193}(2),  563--617 (2021),
  \url{https://www.jstor.org/stable/10.4007/annals.2021.193.2.4}

\bibitem{10.1007/11533719_86}
K{\'a}ra, J., Kratochv{\'i}l, J., Wood, D.R.: On the complexity of the balanced
  vertex ordering problem. In: Wang, L. (ed.) Computing and Combinatorics. pp.
  849--858. Springer Berlin Heidelberg, Berlin, Heidelberg (2005)

\bibitem{wood}
Wood, D.: Minimising the number of bends and volume in three-dimensional
  orthogonal graph drawings with a diagonal vertex layout. Algorithmica (New
  York)  \textbf{39} (03 2002). \doi{10.1007/s00453-004-1091-4}

\end{thebibliography}
\newpage
\section{Appendix - Full Proof}

\setcounter{equation}{0}
\setcounter{subsection}{3}
\subsection{Imbalance on Chained Complete bipartite graphs}

\subsubsection{Proof of the upper bound}

\begin{lemma}\label{ap-lem4}
	Let $G = (X,Y,E)$ be a chained complete bigraph with corresponding MCB-component family $\mathscr{C} = \{C_1, \dots, C_n\}$. Let $C_i \in \mathscr{C}$ such that $|X_i| = 1$ or $|Y_i| = 1$. W.l.o.g. assume that $|X_i| = 1$, then neither overlapping vertices $s_{i-1}$ nor $s_i$ can be in $X_i$. Formally, for all $C_i \in \mathscr{C}$,
	$$(|X_i| = 1 \implies s_{i-1} \notin X_i \wedge s_{i} \notin X_i) \wedge (|Y_i| = 1 \implies s_{i-1} \notin Y_i \wedge s_{i} \notin Y_i).$$
\end{lemma}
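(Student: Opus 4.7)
The plan is to prove the contrapositive by contradiction, directly invoking the maximality clause in the definition of $\mathscr{C}$. By the symmetry between $X$ and $Y$, and between $s_{i-1}$ and $s_i$, I only need to handle one representative case; the others follow by relabeling. Concretely, I will assume $|X_i| = 1$ and $s_i \in X_i$, and derive a contradiction. The case $s_{i-1} \in X_i$ is analogous using $C_{i-1}$ in place of $C_{i+1}$, and the $|Y_i| = 1$ cases follow by swapping the roles of $X$ and $Y$. For the boundary indices $i=1$ (where $s_{i-1}$ does not exist) and $i=n$ (where $s_i$ does not exist) the corresponding half of the claim is vacuous.

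Under the assumption $|X_i| = 1$ and $s_i \in X_i$, I have $X_i = \{s_i\}$. Since $s_i$ is an overlapping vertex it lies in $C_{i+1}$, and because $s_i \in X$ this forces $s_i \in X_{i+1}$. The graph $G[C_{i+1}]$ is a complete bigraph containing an $X$-vertex, so $Y_{i+1}$ is nonempty; pick any $y' \in Y_{i+1}$. I would then verify that $y' \notin C_i$: if it were, then $y' \in C_i \cap C_{i+1} = \{s_i\}$, which is impossible because $y' \in Y$ while $s_i \in X$. Thus $y'$ is a genuine new vertex outside $C_i$.

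Now I consider $G[C_i \cup \{y'\}]$. Its bipartition is $\{s_i\}$ versus $Y_i \cup \{y'\}$. Since $G[C_i]$ is a complete bigraph, $s_i$ is adjacent to every vertex of $Y_i$; and since $s_i, y'$ both lie in $C_{i+1}$, which induces a complete bigraph, $s_i$ is adjacent to $y'$ as well. Hence $G[C_i \cup \{y'\}] \cong K_{1,|Y_i|+1}$ is itself a complete bigraph, contradicting the maximality clause in the definition of $\mathscr{C}$ which requires $G[C_i \cup \{v\}]$ not to be a complete bigraph for any $v \in V \setminus C_i$. I expect no real obstacle here; the only subtlety is checking that the witness $y'$ does not already lie in $C_i$, and that drops out immediately from $|C_i \cap C_{i+1}| = 1$ together with the bipartite class of $y'$.
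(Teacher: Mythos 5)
Your proposal is correct and takes essentially the same route as the paper, which also argues by contradiction from the maximality clause: assuming $|X_i|=1$ and $s_i \in X_i$, the paper observes that $G[C_i \cup Y_{i+1}]$ is a complete bigraph, contradicting maximality. Your version is in fact slightly more faithful to the stated definition (which forbids extending $C_i$ by a \emph{single} vertex), since you adjoin one witness $y' \in Y_{i+1}$ and check $y' \notin C_i$ explicitly, but the underlying idea is identical.
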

\begin{proof}
	Assume that a vertex set $C_i \in \mathscr{C}$ exists such that 
	$$\big(|X_i| = 1 \wedge (s_{i-1} \in X_i \vee s_{i} \in X_i)\big) \vee \big(|Y_i| = 1 \wedge (s_{i-1} \in Y_i \vee s_{i} \in Y_i)\big).$$ 
	W.l.o.g. assume that $|X_i| = 1 \wedge s_{i} \in X_i$. Then $G[C_i \cup Y_{i+1}]$ is a complete bigraph. This contradicts the maximality of $C_i$. Thus, there exists no $C_i \in \mathscr{C}$ such that 
	\begin{align}
	\big(|X_i| = 1 \wedge (s_{i-1} \in X_i \vee s_{i} \in X_i)\big) \vee \big(|Y_i| = 1 \wedge (s_{i-1} \in Y_i \vee s_{i} \in Y_i)\big). & 
	\end{align}
\end{proof}

\begin{lemma}\label{ap-rpigimb1}
	Given a chained complete bigraph $G = (X,Y,E)$ with corresponding MCB-component family $\mathscr{C} = \{C_1, \dots, C_n\}$, we have that
	\begin{align*}
	I(G) &\leq \sum\limits_{i=1}^{n}|X_i|\cdot|Y_i| + (|X_i| \mod 2) \cdot (|Y_i|\mod 2) \\
	&- \Bigg(\sum\limits_{i=1}^{n-1}g(s_i,C_i) + g(s_i,C_{i+1})\Bigg) + \Bigg(\sum\limits_{i=1}^{n-1}|g(s_i,C_i) - g(s_i,C_{i+1})|\Bigg).
	\end{align*}
\end{lemma}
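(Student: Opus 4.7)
The plan is to establish the upper bound constructively by exhibiting an ordering $\sigma_{X \cup Y}$ of $V(G)$ whose imbalance equals the stated right-hand side. For each $C_i \in \mathscr{C}$ I will build a subordering $\sigma_{C_i}$ of its vertices that (a) is imbalance-optimal for the complete bigraph $G[C_i]$---so its imbalance equals $|X_i|\cdot|Y_i| + (|X_i| \bmod 2)(|Y_i| \bmod 2)$ by \cref{thm1}---and (b) places $s_{i-1}$ at the leftmost position and $s_i$ at the rightmost position. \cref{ap-lem4} ensures that neither overlapping vertex lies in a part of cardinality one, so the sandwich and pseudo-sandwich layouts of \cref{lmacbg1} can be adapted to accommodate this boundary constraint. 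The global ordering $\sigma_{X \cup Y}$ is then obtained by concatenating $\sigma_{C_1}, \sigma_{C_2}, \dots, \sigma_{C_n}$, where the shared vertex $s_i$ of $\sigma_{C_i}$ and $\sigma_{C_{i+1}}$ appears exactly once in the concatenation.

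The imbalance of $\sigma_{X \cup Y}$ is analysed by splitting $V(G)$ into non-overlapping vertices and the overlapping set $S$. By \cref{ordanalysisrem3}, every $v \in C_i \setminus S$ has imbalance in $G$ equal to its imbalance in $G[C_i]$. By \cref{ordanalysisrem4}, each $s_i \in S$ sees only neighbours in $C_i \cup C_{i+1}$; since $\sigma_{C_i}$ places $s_i$ at its right end and $\sigma_{C_{i+1}}$ places $s_i$ at its left end, all of $N(s_i) \cap C_i$ lies to the left of $s_i$ and all of $N(s_i) \cap C_{i+1}$ lies to its right, yielding an imbalance of $|g(s_i,C_i) - g(s_i,C_{i+1})|$. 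Summing and using $I(\sigma_{C_i},G[C_i])$ to account for the non-overlapping vertices produces
\[
I(\sigma_{X \cup Y}) = \sum_{i=1}^{n} I(\sigma_{C_i},G[C_i]) - \sum_{i=1}^{n-1}\bigl(g(s_i,C_i) + g(s_i,C_{i+1})\bigr) + \sum_{i=1}^{n-1}|g(s_i,C_i) - g(s_i,C_{i+1})|,
\]
since each $s_i$ was counted inside $I(\sigma_{C_i},G[C_i])$ as $g(s_i,C_i)$ (being rightmost there) and inside $I(\sigma_{C_{i+1}},G[C_{i+1}])$ as $g(s_i,C_{i+1})$ (being leftmost there), so these spurious contributions must be subtracted and the true contribution added back. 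Substituting the optimal value for each $I(\sigma_{C_i},G[C_i])$ recovers the claimed expression.

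I expect the main obstacle to lie in constructing $\sigma_{C_i}$ when $|X_i|$ and $|Y_i|$ are both odd: here the pseudo-sandwich pattern of \cref{compbigraphoddodd} achieves imbalance $|X_i||Y_i|+1$ only when some distinguished vertex occupies the central slot with imbalance exactly one, yet $s_{i-1}$ and $s_i$ must simultaneously be pushed to the two outer ends. This will require a sub-case split based on whether $s_{i-1}$ and $s_i$ lie in the same part of $G[C_i]$ or in different parts; in each branch the pseudo-sandwich must be rearranged so that a free vertex (guaranteed to exist by \cref{ap-lem4}) can take the central role. The remaining parity cases reduce directly to the sandwich construction with the overlapping vertices placed at the outermost positions, and boundary components $C_1$ (no $s_0$) and $C_n$ (no $s_n$) are handled analogously by only requiring a single overlapping vertex to sit at an extreme.
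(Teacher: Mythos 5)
Your proposal is correct and follows essentially the same route as the paper's proof: construct, for each $C_i$, an imbalance-optimal ordering of $G[C_i]$ with $s_{i-1}$ leftmost and $s_i$ rightmost (using \cref{rpigimbaux1} to rule out overlapping vertices in singleton parts), concatenate, and then account for each $s_i$ by subtracting the spurious contributions $g(s_i,C_i)+g(s_i,C_{i+1})$ and adding back $|g(s_i,C_i)-g(s_i,C_{i+1})|$. Even the sub-case split you anticipate for the odd--odd components is exactly the case analysis carried out in the paper's appendix.
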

We prove that the minimum imbalance of $G$ is bounded from above by the above expression, by constructing an ordering $\sigma_{X \cup Y}$ whose imbalance is equivalent to the above expression.

The ordering $\sigma_{X \cup Y}$ is constructed by creating a subordering for each $C_i \in \mathscr{C}$ separately and concatenating those suborderings. 

\subsubsection{Construction of ordering $\mathbb{\sigma}_{X \cup Y}$}\label{ap-constrordrpibg}
We shall first describe the construction of the subordering corresponding to the vertex set $C_i \in \mathscr{C}$, where $2 \leq i \leq n - 1$. Afterwards, we explain the additional steps required to apply the same construction for vertex sets $C_1$ and $C_n$. For $C_i \in \mathscr{C}$, with $2 \leq i \leq n - 1$, we have the following cases based on the signs of the cardinalities of the parts of $G[C_i]$:
\begin{case}\label{ap-constructordercase1}
	Both $|X_i|$ and $|Y_i|$ are even.\\
	Consider the following cases based on the set membership of the overlapping vertices $s_{i-1}$ and $s_{i}$.
	\begin{enumerate}
		\item Overlapping vertices $s_{i-1}$ and $s_{i}$ are contained in the same part:\\
		W.l.o.g. assume that $s_{i-1}, s_{i} \in X_i$. Let us partition $X_i \setminus \{s_{i-1}, s_i\}$ into $X_1$ and $X_2$ such that $|X_1| = |X_2|$. We define $\sigma_{C_i\setminus \{s_{i-1}, s_i\}}$ as follows:
		$$\sigma_{C_i\setminus \{s_{i-1}, s_i\}} = \sigma_{X_1}\sigma_{Y_i}\sigma_{X_2},$$
		where $\sigma_{X_1}$, $\sigma_{Y_i}$, and $\sigma_{X_2}$ are arbitrary orderings on $X_1$, $Y_i$, and $X_2$ respectively.
		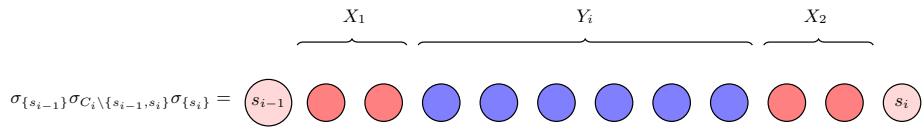
\begin{figure}[H]
			\centering
			\resizebox{\textwidth}{!}{%
\begin{tikzpicture}[-,semithick]

\tikzset{XS/.append style={fill=red!15,draw=black,text=black,shape=circle,minimum size=2em,inner sep=2pt}}
\tikzset{YS/.append style={fill=blue!15,draw=black,text=black,shape=circle,minimum size=2em,inner sep=2pt}}
\tikzset{Y/.append style={fill=blue!50,draw=black,text=black,shape=circle,minimum size=2em,inner sep=2pt}}
\tikzset{X/.append style={fill=red!50,draw=black,text=black,shape=circle,minimum size=2em,inner sep=2pt}}
\tikzset{t/.append style={fill=white,draw=white,text=black}}
\node[t]         (T) {$\sigma_{\{s_{i-1}\}}\sigma_{C_i\setminus \{s_{i-1}, s_i\}}\sigma_{\{s_i\}} =~~~~~~~~~~~~~~~~~~~~~~~~~~~~$};
\node[XS]         (M) [right of=T] {$s_{i-1}$};
\node[X]         (N) [right of=M] {};
\node[X]         (A) [right of=N] {};
\node[Y]         (B) [right of=A] {};
\node[Y]         (C) [right of=B] {};
\node[Y]         (D) [right of=C] {};
\node[Y]         (E) [right of=D] {};
\node[Y]         (F) [right of=E] {};
\node[Y]         (G) [right of=F] {};
\node[X]         (H) [right of=G] {};
\node[X]         (I) [right of=H] {};
\node[XS]         (O) [right of=I] {$s_i$};

\draw[decorate,decoration={brace,amplitude=3pt}] 
(1.5,1) coordinate (t_k_unten) -- (3.4,1) coordinate (t_k_opt_unten); 
\node[t] at (2.5,1.5) {$X_1$};
\draw[decorate,decoration={brace,amplitude=3pt}] 
(3.6,1) coordinate (t_k_unten) -- (9.4,1) coordinate (t_k_opt_unten); 
\node[t] at (6.5,1.5) {$Y_i$};
\draw[decorate,decoration={brace,amplitude=3pt}] 
(9.6,1) coordinate (t_k_unten) -- (11.5,1) coordinate (t_k_opt_unten); 
\node[t] at (10.5,1.5) {$X_2$};
\end{tikzpicture}}%
			\caption{Example of an ordering where $s_{i-1}, s_{i} \in X_i$ and $G[C_i] = K_{6,6}$.}
		\end{figure}
		
		\item Overlapping vertices $s_{i-1}$ and $s_{i}$ are contained in the different parts:\\
		W.l.o.g. assume that $s_{i-1}\in X_i \wedge s_{i} \in Y_i$. Let us partition $X_i \setminus \{s_{i-1}\}$ into $X_1$ and $X_2$ such that $|X_1|+1 = |X_2|$. We also partition $Y_i \setminus \{s_i\}$ into $Y_1$ and $Y_2$ such that $|Y_1| = |Y_2|+1$. We define $\sigma_{C_i\setminus \{s_{i-1}, s_i\}}$ as follows:
		$$\sigma_{C_i\setminus \{s_{i-1}, s_i\}} = \sigma_{X_1}\sigma_{Y_1}\sigma_{X_2}\sigma_{Y_2},$$
		where $\sigma_{X_1}$, $\sigma_{Y_1}$, $\sigma_{X_2}$, and $\sigma_{Y_2}$ are arbitrary orderings on $X_1$, $Y_1$, $X_2$, and $Y_2$ respectively.
		\begin{figure}[H]
			\centering
			\resizebox{\textwidth}{!}{%
\begin{tikzpicture}[-,semithick]

\tikzset{XS/.append style={fill=red!15,draw=black,text=black,shape=circle,minimum size=2em,inner sep=2pt}}
\tikzset{YS/.append style={fill=blue!15,draw=black,text=black,shape=circle,minimum size=2em,inner sep=2pt}}
\tikzset{Y/.append style={fill=blue!50,draw=black,text=black,shape=circle,minimum size=2em,inner sep=2pt}}
\tikzset{X/.append style={fill=red!50,draw=black,text=black,shape=circle,minimum size=2em,inner sep=2pt}}
\tikzset{t/.append style={fill=white,draw=white,text=black}}
\node[t]         (T) {$\sigma_{\{s_{i-1}\}}\sigma_{C_i\setminus \{s_{i-1}, s_i\}}\sigma_{\{s_i\}} =~~~~~~~~~~~~~~~~~~~~~~~~~~~~$};
\node[XS]         (M) [right of=T] {$s_{i-1}$};
\node[X]         (N) [right of=M] {};
\node[X]         (A) [right of=N] {};
\node[Y]         (B) [right of=A] {};
\node[Y]         (C) [right of=B] {};
\node[Y]         (D) [right of=C] {};
\node[X]         (E) [right of=D] {};
\node[X]         (F) [right of=E] {};
\node[X]         (G) [right of=F] {};
\node[Y]         (H) [right of=G] {};
\node[Y]         (I) [right of=H] {};
\node[YS]         (O) [right of=I] {$s_i$};

\draw[decorate,decoration={brace,amplitude=3pt}] 
(1.5,1) coordinate (t_k_unten) -- (3.4,1) coordinate (t_k_opt_unten); 
\node[t] at (2.5,1.5) {$X_1$};
\draw[decorate,decoration={brace,amplitude=3pt}] 
(3.6,1) coordinate (t_k_unten) -- (6.4,1) coordinate (t_k_opt_unten); 
\node[t] at (5,1.5) {$Y_1$};
\draw[decorate,decoration={brace,amplitude=3pt}] 
(6.6,1) coordinate (t_k_unten) -- (9.4,1) coordinate (t_k_opt_unten); 
\node[t] at (8,1.5) {$X_2$};
\draw[decorate,decoration={brace,amplitude=3pt}] 
(9.6,1) coordinate (t_k_unten) -- (11.5,1) coordinate (t_k_opt_unten); 
\node[t] at (10.5,1.5) {$Y_2$};
\end{tikzpicture}}%
			\caption{Example of an ordering where $s_{i-1}\in X_i \wedge s_{i} \in Y_i$ and $G[C_i] = K_{6,6}$.}
		\end{figure}
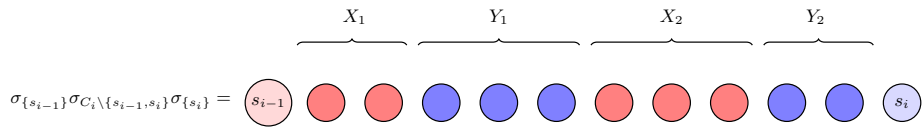
	\end{enumerate}
\end{case}

\begin{case}\label{ap-constructordercase2}
	Either $|X_i|$ is odd or $|Y_i|$ is odd.\\
	W.l.o.g. assume that $|X_i|$ is odd and $|Y_i|$ is even.
	Consider the following cases based on the set membership of the overlapping vertices $s_{i-1}$ and $s_{i}$.
	\begin{enumerate}
		\item Overlapping vertices $s_{i-1}$ and $s_{i}$ are contained in $X_i$:\\
		Let $x \in X_i \setminus \{s_{i-1}, s_i\}$ be chosen arbitrarily. By the fact that $|X_i|$ is odd, such a vertex $x$ must exist.
		Let us partition $X_i \setminus \{s_{i-1}, s_i, x\}$ into $X_1$ and $X_2$ such that $|X_1| = |X_2|$. We also partition $Y_i$ into $Y_1$ and $Y_2$ such that $|Y_1| = |Y_2|$. We define $\sigma_{C_i\setminus \{s_{i-1}, s_i\}}$ as follows:
		$$\sigma_{C_i\setminus \{s_{i-1}, s_i\}} = \sigma_{X_1}\sigma_{Y_1}\sigma_{\{x\}}\sigma_{Y_2}\sigma_{X_2},$$
		where $\sigma_{X_1}$, $\sigma_{Y_1}$, $\sigma_{\{x\}}$, $\sigma_{Y_2}$, and $\sigma_{X_2}$ are arbitrary orderings on $X_1$, $Y_1$, $\{x\}$, $Y_2$, and $X_2$ respectively.
		\begin{figure}[H]
			\centering
			\resizebox{\textwidth}{!}{%
\begin{tikzpicture}[-,semithick]

\tikzset{XS/.append style={fill=red!15,draw=black,text=black,shape=circle,minimum size=2em,inner sep=2pt}}
\tikzset{YS/.append style={fill=blue!15,draw=black,text=black,shape=circle,minimum size=2em,inner sep=2pt}}
\tikzset{Y/.append style={fill=blue!50,draw=black,text=black,shape=circle,minimum size=2em,inner sep=2pt}}
\tikzset{X/.append style={fill=red!50,draw=black,text=black,shape=circle,minimum size=2em,inner sep=2pt}}
\tikzset{t/.append style={fill=white,draw=white,text=black}}
\node[t]         (T) {$\sigma_{\{s_{i-1}\}}\sigma_{C_i\setminus \{s_{i-1}, s_i\}}\sigma_{\{s_i\}} =~~~~~~~~~~~~~~~~~~~~~~~~~~~~$};
\node[XS]        (M) [right of=T] {$s_{i-1}$};
\node[X]         (N) [right of=M] {};
\node[X]         (A) [right of=N] {};
\node[Y]         (B) [right of=A] {};
\node[Y]         (C) [right of=B] {};
\node[Y]         (D) [right of=C] {};
\node[X]         (E) [right of=D] {};
\node[Y]         (F) [right of=E] {};
\node[Y]         (G) [right of=F] {};
\node[Y]         (H) [right of=G] {};
\node[X]         (I) [right of=H] {};
\node[X]         (O) [right of=I] {};
\node[XS]        (P) [right of=O] {$s_i$};

\draw[decorate,decoration={brace,amplitude=3pt}] 
(1.5,1) coordinate (t_k_unten) -- (3.4,1) coordinate (t_k_opt_unten); 
\node[t] at (2.5,1.5) {$X_1$};
\draw[decorate,decoration={brace,amplitude=3pt}] 
(3.6,1) coordinate (t_k_unten) -- (6.4,1) coordinate (t_k_opt_unten); 
\node[t] at (5,1.5) {$Y_1$};
\draw[decorate,decoration={brace,amplitude=3pt}] 
(6.6,1) coordinate (t_k_unten) -- (7.4,1) coordinate (t_k_opt_unten); 
\node[t] at (7,1.5) {$\{x\}$};
\draw[decorate,decoration={brace,amplitude=3pt}] 
(7.6,1) coordinate (t_k_unten) -- (10.4,1) coordinate (t_k_opt_unten); 
\node[t] at (9,1.5) {$Y_2$};
\draw[decorate,decoration={brace,amplitude=3pt}] 
(10.6,1) coordinate (t_k_unten) -- (12.5,1) coordinate (t_k_opt_unten); 
\node[t] at (11.5,1.5) {$X_2$};
\end{tikzpicture}}%
			\caption{Example of an ordering  where $s_{i-1}, s_{i} \in X_i$ and $G[C_i] = K_{7,6}$.}
		\end{figure}
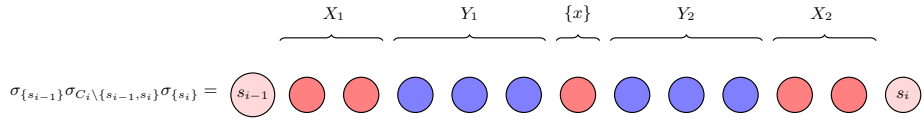
		
		\item Overlapping vertices $s_{i-1}$ and $s_{i}$ are contained in $Y_i$:\\
		Let us partition $Y_i \setminus \{s_{i-1}, s_i\}$ into $Y_1$ and $Y_2$ such that $|Y_1| = |Y_2|$. We define $\sigma_{C_i\setminus \{s_{i-1}, s_i\}}$ as follows:
		$$\sigma_{C_i\setminus \{s_{i-1}, s_i\}} = \sigma_{Y_1}\sigma_{X_i}\sigma_{Y_2},$$
		where $\sigma_{Y_1}$, $\sigma_{X_i}$, and $\sigma_{Y_2}$ are arbitrary orderings on $Y_1$, $X_i$, and $Y_2$ respectively.
		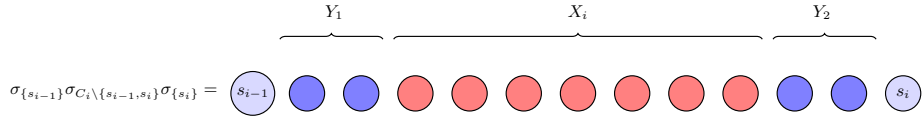
\begin{figure}[H]
			\centering
			\resizebox{\textwidth}{!}{%
\begin{tikzpicture}[-,semithick]

\tikzset{XS/.append style={fill=red!15,draw=black,text=black,shape=circle,minimum size=2em,inner sep=2pt}}
\tikzset{YS/.append style={fill=blue!15,draw=black,text=black,shape=circle,minimum size=2em,inner sep=2pt}}
\tikzset{Y/.append style={fill=blue!50,draw=black,text=black,shape=circle,minimum size=2em,inner sep=2pt}}
\tikzset{X/.append style={fill=red!50,draw=black,text=black,shape=circle,minimum size=2em,inner sep=2pt}}
\tikzset{t/.append style={fill=white,draw=white,text=black}}
\node[t]         (T) {$\sigma_{\{s_{i-1}\}}\sigma_{C_i\setminus \{s_{i-1}, s_i\}}\sigma_{\{s_i\}} =~~~~~~~~~~~~~~~~~~~~~~~~~~~~$};
\node[YS]        (M) [right of=T] {$s_{i-1}$};
\node[Y]         (N) [right of=M] {};
\node[Y]         (A) [right of=N] {};
\node[X]         (B) [right of=A] {};
\node[X]         (C) [right of=B] {};
\node[X]         (D) [right of=C] {};
\node[X]         (E) [right of=D] {};
\node[X]         (F) [right of=E] {};
\node[X]         (G) [right of=F] {};
\node[X]         (H) [right of=G] {};
\node[Y]         (I) [right of=H] {};
\node[Y]         (O) [right of=I] {};
\node[YS]        (P) [right of=O] {$s_i$};

\draw[decorate,decoration={brace,amplitude=3pt}] 
(1.5,1) coordinate (t_k_unten) -- (3.4,1) coordinate (t_k_opt_unten); 
\node[t] at (2.5,1.5) {$Y_1$};
\draw[decorate,decoration={brace,amplitude=3pt}] 
(3.6,1) coordinate (t_k_unten) -- (10.4,1) coordinate (t_k_opt_unten); 
\node[t] at (7,1.5) {$X_i$};
\draw[decorate,decoration={brace,amplitude=3pt}] 
(10.6,1) coordinate (t_k_unten) -- (12.5,1) coordinate (t_k_opt_unten); 
\node[t] at (11.5,1.5) {$Y_2$};
\end{tikzpicture}}%
			\caption{Example of an ordering  where $s_{i-1}, s_{i} \in Y_i$ and $G[C_i] = K_{7,6}$.}
		\end{figure}
		
		\item Overlapping vertices $s_{i-1}$ and $s_{i}$ are contained in different parts:\\
		W.l.o.g. assume that $s_{i-1} \in X_i$ and $s_{i} \in Y_i$.
		Let us partition $X_i \setminus \{s_{i-1}\}$ into $X_1$ and $X_2$ such that $|X_1|+2 = |X_2|$. We also partition $Y_i \setminus \{s_i\}$ into $Y_1$ and $Y_2$ such that $|Y_1| = |Y_2|+1$. We define $\sigma_{C_i\setminus \{s_{i-1}, s_i\}}$ as follows:
		$$\sigma_{C_i\setminus \{s_{i-1}, s_i\}} = \sigma_{X_1}\sigma_{Y_1}\sigma_{X_2}\sigma_{Y_2},$$
		where $\sigma_{X_1}$, $\sigma_{Y_1}$, $\sigma_{X_2}$, and $\sigma_{Y_2}$ are arbitrary orderings on $X_1$, $Y_1$, $X_2$, and $Y_2$ respectively.
		\begin{figure}[H]
			\centering
			\resizebox{\textwidth}{!}{%
\begin{tikzpicture}[-,semithick]

\tikzset{XS/.append style={fill=red!15,draw=black,text=black,shape=circle,minimum size=2em,inner sep=2pt}}
\tikzset{YS/.append style={fill=blue!15,draw=black,text=black,shape=circle,minimum size=2em,inner sep=2pt}}
\tikzset{Y/.append style={fill=blue!50,draw=black,text=black,shape=circle,minimum size=2em,inner sep=2pt}}
\tikzset{X/.append style={fill=red!50,draw=black,text=black,shape=circle,minimum size=2em,inner sep=2pt}}
\tikzset{t/.append style={fill=white,draw=white,text=black}}
\node[t]         (T) {$\sigma_{\{s_{i-1}\}}\sigma_{C_i\setminus \{s_{i-1}, s_i\}}\sigma_{\{s_i\}} =~~~~~~~~~~~~~~~~~~~~~~~~~~~~$};
\node[XS]        (M) [right of=T] {$s_{i-1}$};
\node[X]         (N) [right of=M] {};
\node[X]         (A) [right of=N] {};
\node[Y]         (B) [right of=A] {};
\node[Y]         (C) [right of=B] {};
\node[Y]         (D) [right of=C] {};
\node[X]         (E) [right of=D] {};
\node[X]         (F) [right of=E] {};
\node[X]         (G) [right of=F] {};
\node[X]         (H) [right of=G] {};
\node[Y]         (I) [right of=H] {};
\node[Y]         (O) [right of=I] {};
\node[YS]        (P) [right of=O] {$s_i$};

\draw[decorate,decoration={brace,amplitude=3pt}] 
(1.5,1) coordinate (t_k_unten) -- (3.4,1) coordinate (t_k_opt_unten); 
\node[t] at (2.5,1.5) {$X_1$};
\draw[decorate,decoration={brace,amplitude=3pt}] 
(3.6,1) coordinate (t_k_unten) -- (6.4,1) coordinate (t_k_opt_unten); 
\node[t] at (5,1.5) {$Y_1$};
\draw[decorate,decoration={brace,amplitude=3pt}] 
(6.6,1) coordinate (t_k_unten) -- (10.4,1) coordinate (t_k_opt_unten); 
\node[t] at (8.5,1.5) {$X_2$};
\draw[decorate,decoration={brace,amplitude=3pt}] 
(10.6,1) coordinate (t_k_unten) -- (12.5,1) coordinate (t_k_opt_unten); 
\node[t] at (11.5,1.5) {$Y_2$};
\end{tikzpicture}}%
			\caption{Example of an ordering  where $s_{i-1} \in X_i \wedge s_{i} \in Y_i$ and $G[C_i] = K_{7,6}$.}
		\end{figure}
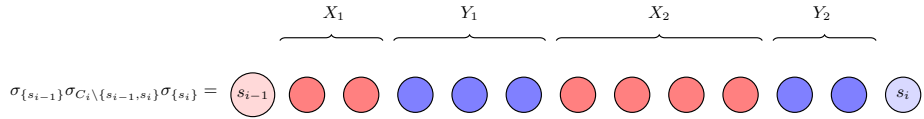
	\end{enumerate}
\end{case}
\begin{case}\label{ap-constructordercase3}
	Both $|X_i|$ and $|Y_i|$ are odd.\\
	Consider the following cases based on the set membership of the overlapping vertices $s_{i-1}$ and $s_{i}$.
	\begin{enumerate}
		\item Overlapping vertices $s_{i-1}$ and $s_{i}$ are contained in the same part:\\
		W.l.o.g. assume that $s_{i-1}, s_{i} \in X_i$. Let us choose $x_m \in X_i \setminus \{s_{i-1}, s_i\}$ arbitrarily. Let $X_1$ and $X_2$ partition $X_i \setminus \{s_{i-1}, s_i, x_m\}$ such that $|X_1| = |X_2|$. Additionally, let $Y_1$ and $Y_2$ partition $Y$ such that $||Y_1|-|Y_2||=1$. We define $\sigma_{C_i\setminus \{s_{i-1}, s_i\}}$ as follows:
		$$\sigma_{C_i\setminus \{s_{i-1}, s_i\}} = \sigma_{X_1}\sigma_{Y_1}\sigma_{\{x_m\}}\sigma_{Y_2}\sigma_{X_2},$$
		where $\sigma_{X_1}$, $\sigma_{Y_1}$, $\sigma_{Y_2}$,  and $\sigma_{X_2}$ are arbitrary orderings on $X_1$, $Y_1$, $Y_2$, and $X_2$ respectively.
		\begin{figure}[H]
			\centering
			\resizebox{\textwidth}{!}{%
	\begin{tikzpicture}[-,semithick]
	
	\tikzset{XS/.append style={fill=red!15,draw=black,text=black,shape=circle,minimum size=2em,inner sep=2pt}}
	\tikzset{YS/.append style={fill=blue!15,draw=black,text=black,shape=circle,minimum size=2em,inner sep=2pt}}
	\tikzset{Y/.append style={fill=blue!50,draw=black,text=black,shape=circle,minimum size=2em,inner sep=2pt}}
	\tikzset{X/.append style={fill=red!50,draw=black,text=black,shape=circle,minimum size=2em,inner sep=2pt}}
	\tikzset{t/.append style={fill=white,draw=white,text=black}}
	\node[t]         (T) {$\sigma_{\{s_{i-1}\}}\sigma_{C_i\setminus \{s_{i-1}, s_i\}}\sigma_{\{s_i\}} =~~~~~~~~~~~~~~~~~~~~~~~~~~~~$};
	\node[XS]        (M) [right of=T] {$s_{i-1}$};
	\node[X]         (N) [right of=M] {};
	\node[X]         (A) [right of=N] {};
	\node[Y]         (B) [right of=A] {};
	\node[Y]         (C) [right of=B] {};
	\node[Y]         (D) [right of=C] {};
	\node[X]         (E) [right of=D] {};
	\node[Y]         (F) [right of=E] {};
	\node[Y]         (G) [right of=F] {};
	\node[Y]         (H) [right of=G] {};
	\node[Y]         (I) [right of=H] {};
	\node[X]         (O) [right of=I] {};
	\node[X]        (P) [right of=O] {};
	\node[XS]        (Q) [right of=P] {$s_i$};
	
	\draw[decorate,decoration={brace,amplitude=3pt}] 
	(1.5,1) coordinate (t_k_unten) -- (3.4,1) coordinate (t_k_opt_unten); 
	\node[t] at (2.5,1.5) {$X_1$};
	\draw[decorate,decoration={brace,amplitude=3pt}] 
	(3.6,1) coordinate (t_k_unten) -- (6.4,1) coordinate (t_k_opt_unten); 
	\node[t] at (5,1.5) {$Y_1$};
	\draw[decorate,decoration={brace,amplitude=3pt}] 
	(6.6,1) coordinate (t_k_unten) -- (7.4,1) coordinate (t_k_opt_unten); 
	\node[t] at (7,1.5) {$\{x\}$};
	\draw[decorate,decoration={brace,amplitude=3pt}] 
	(7.6,1) coordinate (t_k_unten) -- (11.4,1) coordinate (t_k_opt_unten); 
	\node[t] at (9.5,1.5) {$Y_2$};
	\draw[decorate,decoration={brace,amplitude=3pt}] 
	(11.6,1) coordinate (t_k_unten) -- (13.5,1) coordinate (t_k_opt_unten); 
	\node[t] at (12.5,1.5) {$X_2$};
	\end{tikzpicture}}%
			\caption{Example of an ordering  where $s_{i-1}, s_{i} \in X_i$ and $G[C_i] = K_{7,7}$.}
		\end{figure}
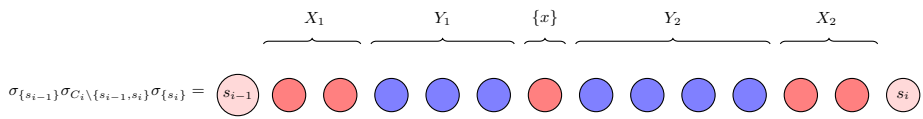
		
		\item Overlapping vertices $s_{i-1}$ and $s_{i}$ are contained in the different parts:\\
		W.l.o.g. assume that $s_{i-1}\in X_i \wedge s_{i} \in Y_i$. Let us partition $X_i \setminus \{s_{i-1}\}$ into $X_1$ and $X_2$ such that $|X_1| = |X_2|$. We also partition $Y_i \setminus \{s_i\}$ into $Y_1$ and $Y_2$ such that $|Y_1| = |Y_2|$. We define $\sigma_{C_i\setminus \{s_{i-1}, s_i\}}$ as follows:
		
		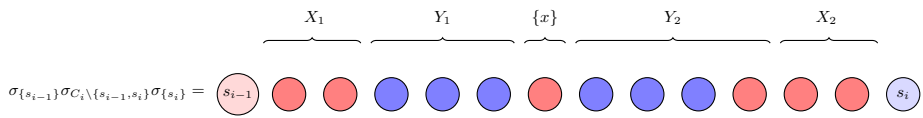
\begin{figure}[H]
			\centering
			\resizebox{\textwidth}{!}{%
	\begin{tikzpicture}[-,semithick]
	
	\tikzset{XS/.append style={fill=red!15,draw=black,text=black,shape=circle,minimum size=2em,inner sep=2pt}}
	\tikzset{YS/.append style={fill=blue!15,draw=black,text=black,shape=circle,minimum size=2em,inner sep=2pt}}
	\tikzset{Y/.append style={fill=blue!50,draw=black,text=black,shape=circle,minimum size=2em,inner sep=2pt}}
	\tikzset{X/.append style={fill=red!50,draw=black,text=black,shape=circle,minimum size=2em,inner sep=2pt}}
	\tikzset{t/.append style={fill=white,draw=white,text=black}}
	\node[t]         (T) {$\sigma_{\{s_{i-1}\}}\sigma_{C_i\setminus \{s_{i-1}, s_i\}}\sigma_{\{s_i\}} =~~~~~~~~~~~~~~~~~~~~~~~~~~~~$};
	\node[XS]        (M) [right of=T] {$s_{i-1}$};
	\node[X]         (N) [right of=M] {};
	\node[X]         (A) [right of=N] {};
	\node[Y]         (B) [right of=A] {};
	\node[Y]         (C) [right of=B] {};
	\node[Y]         (D) [right of=C] {};
	\node[X]         (E) [right of=D] {};
	\node[Y]         (F) [right of=E] {};
	\node[Y]         (G) [right of=F] {};
	\node[Y]         (H) [right of=G] {};
	\node[X]         (I) [right of=H] {};
	\node[X]         (O) [right of=I] {};
	\node[X]        (P) [right of=O] {};
	\node[YS]        (Q) [right of=P] {$s_i$};
	
	\draw[decorate,decoration={brace,amplitude=3pt}] 
	(1.5,1) coordinate (t_k_unten) -- (3.4,1) coordinate (t_k_opt_unten); 
	\node[t] at (2.5,1.5) {$X_1$};
	\draw[decorate,decoration={brace,amplitude=3pt}] 
	(3.6,1) coordinate (t_k_unten) -- (6.4,1) coordinate (t_k_opt_unten); 
	\node[t] at (5,1.5) {$Y_1$};
	\draw[decorate,decoration={brace,amplitude=3pt}] 
	(6.6,1) coordinate (t_k_unten) -- (7.4,1) coordinate (t_k_opt_unten); 
	\node[t] at (7,1.5) {$\{x\}$};
	\draw[decorate,decoration={brace,amplitude=3pt}] 
	(7.6,1) coordinate (t_k_unten) -- (11.4,1) coordinate (t_k_opt_unten); 
	\node[t] at (9.5,1.5) {$Y_2$};
	\draw[decorate,decoration={brace,amplitude=3pt}] 
	(11.6,1) coordinate (t_k_unten) -- (13.5,1) coordinate (t_k_opt_unten); 
	\node[t] at (12.5,1.5) {$X_2$};
	\end{tikzpicture}}%
			\caption{Example of an ordering  where $s_{i-1} \in X_i \wedge s_{i} \in Y_i$ and $G[C_i] = K_{7,7}$.}
		\end{figure}
	\end{enumerate}
\end{case}
The vertex set $C_1$ contains exactly one overlapping vertex, namely $s_1$. We apply the above described construction to create $\sigma_{C_1}$ by arbitrarily picking a vertex in $C_1 \setminus \{s_1\}$ to be $s_0$. Similarly, for the vertex set $C_n$, we construct $\sigma_{C_n}$ by arbitrarily picking a vertex in $C_n \setminus \{s_{n-1}\}$ to be $s_n$ and applying the above construction.

Note that each constructed subordering $\sigma_{\{s_{i-1}\}}\sigma_{C_i\setminus \{s_{i-1}, s_i\}}\sigma_{\{s_i\}}$ is imbalance optimal for the graph induced by the vertices of $C_i$. That is, $\forall_{C_i\in \mathscr{C}} I(G[C_i]) = I(G[C_i], \sigma_{\{s_{i-1}\}}\sigma_{C_i\setminus \{s_{i-1}, s_i\}}\sigma_{\{s_i\}})$. 

Using the above constructed suborderings, we define ordering $\sigma_{X \cup Y}$ as:
\\\\
\begin{align*}
	\sigma_{X \cup Y} = &\sigma_{\{s_0\}}\sigma_{C_1\setminus\{s_0,s_1\}}\sigma_{\{s_1\}}\sigma_{C_2\setminus\{s_1,s_2\}}\sigma_{\{s_2\}}\dots \\
	&\sigma_{C_{n-1}\setminus\{s_{n-2},s_{n-1}\}}\sigma_{\{s_{n-1}\}}\sigma_{C_n\setminus\{s_{n-1},s_n\}}\sigma_{\{s_n\}}.
\end{align*}

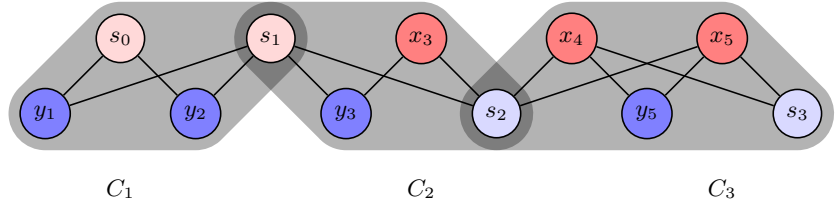
\begin{figure}[H]
	\centering
	\begin{tikzpicture}[-,semithick]
\tikzset{XS/.append style={fill=red!15,draw=black,text=black,shape=circle}}
\node[XS]         (x2) at (1,1) {$s_1$};

\tikzset{YS/.append style={fill=blue!15,draw=black,text=black,shape=circle}}
\node[YS]         (y4) at (4,0) {$s_2$};

\tikzset{Y/.append style={fill=blue!50,draw=black,text=black,shape=circle}}
\node[Y]         (y1) at (-2,0) {$y_1$};
\node[Y]         (y2) at (0,0) {$y_2$};
\node[Y]         (y3) at (2,0) {$y_3$};
\node[Y]         (y5) at (6,0) {$y_5$};
\node[YS]         (y6) at (8,0) {$s_3$};

\tikzset{X/.append style={fill=red!50,draw=black,text=black,shape=circle}}
\node[XS]         (x1) at (-1,1) {$s_0$};
\node[X]         (x3) at (3,1) {$x_3$};
\node[X]         (x4) at (5,1) {$x_4$};
\node[X]         (x5) at (7,1) {$x_5$};

\tikzset{t/.append style={fill=white,draw=white,text=black}}
\node[t]         at (-1,-1) {$C_1$};
\node[t]         at (3,-1) {$C_2$};
\node[t]         at (7,-1) {$C_3$};

\path (x1) edge              node {} (y1)
edge              node {} (y2)
(x2) edge              node {} (y1)
edge              node {} (y2)
edge              node {} (y3)
edge              node {} (y4)
(x3) edge              node {} (y3)
edge              node {} (y4)
(x4) edge              node {} (y5)
edge              node {} (y6)
edge              node {} (y4)
(x5) edge              node {} (y5)
edge              node {} (y6)
edge              node {} (y4);

\begin{scope}[on background layer]     
\fill[black,opacity=0.3] \convexpath{x1,x2,y2,y1}{1.5em}; 
\fill[black,opacity=0.3] \convexpath{x2,x3,y4,y3}{1.5em}; 
\fill[black,opacity=0.3] \convexpath{x4,x5,y6,y5,y4}{1.5em};
\end{scope}
\end{tikzpicture}
	\caption{Example of a chained complete bigraph with assigned $s_0$, and $s_n$.}
\end{figure}
\begin{figure}[H]
	\centering
	\resizebox{0.95\textwidth}{!}{%
\begin{tikzpicture}[-,semithick]
  
  \tikzset{XS/.append style={fill=red!15,draw=black,text=black,shape=circle}}
  \tikzset{YS/.append style={fill=blue!15,draw=black,text=black,shape=circle}}
  \tikzset{Y/.append style={fill=blue!50,draw=black,text=black,shape=circle}}
  \tikzset{X/.append style={fill=red!50,draw=black,text=black,shape=circle}}
  \node at (-1,0) {$\sigma_{C_1} = $};
  \node[Y]         (x1) at (0,0) {$y_1$};
  \node[Y]         (y2) at (1,0) {$y_2$};
  
  \node at (3,0) {$\sigma_{C_2} = $};
  \node[Y]         (y3) at (4,0) {$y_3$};
  \node[X]         (y4) at (5,0) {$x_3$};
  
  \node at (7,0) {$\sigma_{C_3} = $};
  \node[X]         (y5) at (8,0) {$x_4$};
  \node[Y]         (x4) at (9,0) {$y_5$};
  \node[X]         (y6) at (10,0) {$x_5$};
  
  \node at (-1,-1) {$\sigma_{X \cup Y} = $};

  \node[XS]         (y1) at (0.25,-1) {$s_0$};
  
  \node[Y]         (x1) at (1.25,-1) {$y_1$};
  \node[Y]         (y2) at (2.25,-1) {$y_2$};
  
  \node[XS]         (x1) at (3.25,-1) {$s_1$};
  
  \node[Y]         (y3) at (4.25,-1) {$y_3$};
  \node[X]         (y4) at (5.25,-1) {$x_3$};
  
  \node[YS]         (x1) at (6.25,-1) {$s_2$};
  
  \node[X]         (y5) at (7.25,-1) {$x_4$};
  \node[Y]         (x4) at (8.25,-1) {$y_5$};
  \node[X]         (y6) at (9.25,-1) {$x_5$};
  
  \node[YS]         (x1) at (10.25,-1) {$s_3$};
\end{tikzpicture}

}%
	\caption{Constructed suborderings and final ordering of the example graph above.}
\end{figure}
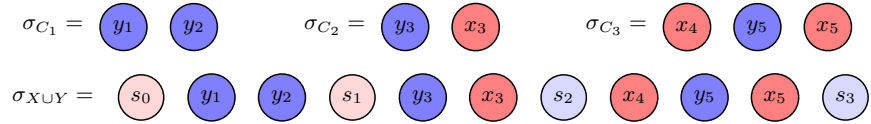

\subsubsection{Proof of \cref{ap-rpigimb1}}\label{ap-analysisconstrordsec}
To prove \cref{ap-rpigimb1} we require several remarks on the constructed ordering $\sigma_{X \cup Y}$.
\setcounter{remark}{4}
\begin{remark}\label{ap-ordanalysisrem3}
	From the definition of chained complete bigraphs, a vertex $v \in (X \cup Y) \setminus S$ is contained in exactly one vertex set $C_i \in \mathscr{C}$. Additionally, by \cref{astrtrip}, the neighborhood of $v$ is a subset of $C_i$. Therefore, the imbalance of $v$ is only influenced by the vertices in $C_i$. That is,
	$\forall_{C_i \in \mathscr{C}} ~\forall_{v \in C_i \setminus S} ~I(v, \sigma_{X\cup Y}, G) = I(v, \sigma_{X\cup Y}^{C_i}, G[C_i]).$
\end{remark}
\begin{remark}\label{ap-ordanalysisrem4}
	From the definition of chained complete bigraphs, an overlapping vertex $s_i \in S$ is contained in exactly two vertex sets $C_i, C_{i+1} \in \mathscr{C}$. Additionally, by \cref{astrtrip}, the neighborhood of $s_i$ is a subset of $C_i \cup C_{i+1}$. Therefore, the imbalance of $s_i$ is only influenced by the vertices in $C_i \cup C_{i+1}$. That is,
	$\forall_{s_i \in S} ~I(s_{i}, \sigma_{X\cup Y}, G) = I(s_i, \sigma_{X\cup Y}^{C_{i} \cup C_{i+1}}, G[C_i \cup C_{i+1}]).$
	From the position of overlapping vertex $s_i$ in ordering $\sigma_{\{s_{i-1}\}}\sigma_{C_i\setminus \{s_{i-1}, s_i\}}\sigma_{\{s_{i}\}}\sigma_{C_{i+1}\setminus \{s_{i}, s_{i+1}\}}\sigma_{\{s_{i+1}\}}$, we deduce that the imbalance of $s_i$ in ordering 
	$$\sigma_{\{s_{i-1}\}}\sigma_{C_i\setminus \{s_{i-1}, s_i\}}\sigma_{\{s_{i}\}}\sigma_{C_{i+1}\setminus \{s_{i}, s_{i+1}\}}\sigma_{\{s_{i+1}\}}$$ 
	is $|g(s_i,C_i)- g(s_i,C_{i+1})|$. That is,
	
	\begin{align*}
	\MoveEqLeft I(s_i, \sigma_{\{s_{i-1}\}}\sigma_{C_i\setminus \{s_{i-1}, s_i\}}\sigma_{\{s_{i}\}}\sigma_{C_{i+1}\setminus \{s_{i}, s_{i+1}\}}\sigma_{\{s_{i+1}\}}, G[C_i \cup C_{i+1}]) \\
	& = |g(s_i,C_i)- g(s_i,C_{i+1})|.
	\end{align*}
\end{remark}
\setcounter{remark}{6}
\begin{remark}\label{ap-ordanalysisrem1}
	By applying a similar analysis as in \cref{lmacbg1} together with \cref{rpigimbaux1}, we derive that for all $C_i \in \mathscr{C}$ the ordering $\sigma_{\{s_{i-1}\}}\sigma_{C_i\setminus \{s_{i-1}, s_i\}}\sigma_{\{s_i\}}$ is an imbalance optimal ordering for complete bigraph $G[C_i]$. That is, by \cref{thm1} and the aforementioned analysis, for all $C_i \in \mathscr{C}$ we have
	$$I(\sigma_{\{s_{i-1}\}}\sigma_{C_i\setminus \{s_{i-1}, s_i\}}\sigma_{\{s_i\}}, G[C_i]) = |X_i|\cdot|Y_i| + (|X_i| \mod 2) \cdot (|Y_i|\mod 2).$$
\end{remark}
\begin{remark}\label{ap-ordanalysisrem2}
	Since overlapping vertex $s_i \in S$ is the rightmost vertex in ordering $\sigma_{\{s_{i-1}\}}\sigma_{C_i\setminus \{s_{i-1}, s_i\}}\sigma_{\{s_i\}}$, we deduce that the imbalance of $s_i$ in ordering $\sigma_{\{s_{i-1}\}}\sigma_{C_i\setminus \{s_{i-1}, s_i\}}\sigma_{\{s_i\}}$ is the cardinality of the part of $G[C_i]$ in which $s_i$ is not contained. That is, for all $s_i \in S$ we have
	$$I(s_{i},\sigma_{\{s_{i-1}\}}\sigma_{C_i\setminus \{s_{i-1}, s_i\}}\sigma_{\{s_i\}}, G[C_i]) = g(s_{i},C_i).$$
	
	Similarly, since overlapping vertex $s_i \in S$ is the leftmost vertex in ordering $\sigma_{\{s_{i}\}}\sigma_{C_{i+1}\setminus \{s_{i}, s_{i+1}\}}\sigma_{\{s_{i+1}\}}$, we deduce that the imbalance of $s_i$ in ordering $\sigma_{\{s_{i}\}}\sigma_{C_{i+1}\setminus \{s_{i}, s_{i+1}\}}\sigma_{\{s_{i+1}\}}$ is the cardinality of the part of $G[C_{i+1}]$ in which $s_i$ is not contained. That is, for all $s_i \in S$ we have
	$$I(s_{i},\sigma_{\{s_{i}\}}\sigma_{C_{i+1}\setminus \{s_{i}, s_{i+1}\}}\sigma_{\{s_{i+1}\}}, G[C_{i+1}]) = g(s_{i},C_{i+1}).$$
\end{remark}
\begin{proof}
	The imbalance of the ordering $\sigma_{X \cup Y} =: \sigma_{V}$, constructed using the method of \cref{ap-constrordrpibg}, can be written as follows:
	\begin{align}
	I(\sigma_{V})& = \Bigg(\sum\limits_{v \in (X\cup Y) \setminus S} I(v, \sigma_{X \cup Y}, G)\Bigg) + \Bigg(\sum\limits_{v \in S} I(v, \sigma_{X \cup Y}, G)\Bigg) \notag\\ 
	&= \sum\limits_{i=1}^{n} I(\sigma_{\{s_{i-1}\}}\sigma_{C_i\setminus \{s_{i-1}, s_i\}}\sigma_{\{s_{i}\}}, G[C_i]) \notag\\
	&- \sum\limits_{i=1}^{n-1} I(s_i, \sigma_{\{s_{i-1}\}}\sigma_{C_i\setminus \{s_{i-1}, s_i\}}\sigma_{\{s_i\}}, G[C_i]) \notag \\
	&- \sum\limits_{i=1}^{n-1} I(s_i, \sigma_{\{s_{i}\}}\sigma_{C_{i+1}\setminus \{s_{i}, s_{i+1}\}}\sigma_{\{s_{i+1}\}}, G[C_{i+1}]) \notag\\
	&+ \sum\limits_{s_i \in S} I(s_i, \sigma_{\{s_{i-1}\}}\sigma_{C_i\setminus \{s_{i-1}, s_i\}}\sigma_{\{s_{i}\}}\sigma_{C_{i+1}\setminus \{s_{i}, s_{i+1}\}}\sigma_{\{s_{i+1}\}}, G[C_i \cup C_{i+1}]) \label{ap-constrpibganalysis1}\\
	&= \sum\limits_{i=1}^{n}|X_i|\cdot|Y_i| + (|X_i| \mod 2) \cdot (|Y_i|\mod 2) \notag\\
	&- \Bigg(\sum\limits_{i=1}^{n-1}g(s_i,C_i) + g(s_i,C_{i+1})\Bigg) + \Bigg(\sum\limits_{i=1}^{n-1}|g(s_i,C_i) - g(s_i,C_{i+1})|\Bigg), \label{ap-constrpibganalysis2}
	\end{align}
	where \cref{ap-constrpibganalysis1} follows from \cref{ap-ordanalysisrem3}, \cref{ap-ordanalysisrem4} and the construction method of \cref{ap-constrordrpibg}. \cref{ap-constrpibganalysis2} follows from \cref{ap-ordanalysisrem1} and \cref{ap-ordanalysisrem2}. 
	
	Note that in the expression at \cref{ap-constrpibganalysis1} the first term twice adds an ``incorrect imbalance'' for each vertex in $S$. This ``incorrect imbalance'' is removed by the second and third term of the expression.
\end{proof}

\subsubsection{Proof of the lower bound}
\setcounter{lemma}{6}
\begin{lemma}\label{ap-rpigimb2}
	Given a chained complete bigraph $G = (X,Y,E)$ with corresponding MCB-component family $\mathscr{C} = \{C_1, \dots, C_n\}$, we have that
	\begin{align*}
	I(G) &\geq \sum\limits_{i=1}^{n}|X_i|\cdot|Y_i| + (|X_i| \mod 2) \cdot (|Y_i|\mod 2) \\
	&- \Bigg(\sum\limits_{i=1}^{n-1}g(s_i,C_i) + g(s_i,C_{i+1})\Bigg) + \Bigg(\sum\limits_{i=1}^{n-1}|g(s_i,C_i) - g(s_i,C_{i+1})|\Bigg).
	\end{align*}
\end{lemma}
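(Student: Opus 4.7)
The plan is to prove the lower bound by induction on $n = |\mathscr{C}|$. The base case $n \leq 1$ is trivial: when $n = 0$ the graph is empty, and when $n = 1$ the graph $G$ is itself a complete bigraph so the right-hand side collapses (its $g$-sums are empty) to $|X_1| \cdot |Y_1| + (|X_1| \bmod 2)(|Y_1| \bmod 2)$, matching \cref{thm1}. The real content is the inductive step, which mirrors the argument in the main body (\cref{rpigimb2}) but written out carefully.

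For the inductive step with $|\mathscr{C}| = k+1$, I fix an arbitrary ordering $\sigma_{X \cup Y}$ and decompose $I(\sigma_{X \cup Y})$ by separating the last component $C_{k+1}$ from the rest. Using \cref{ap-ordanalysisrem3}, every non-overlapping vertex $v \in (X \cup Y) \setminus S$ contributes $I(v, \sigma_{X\cup Y}^{C_i}, G[C_i])$ for the unique $C_i$ containing it. Using \cref{ap-ordanalysisrem4}, the overlapping vertex $s_k$ contributes its imbalance computed on $C_k \cup C_{k+1}$. The double-counting bookkeeping then gives the identity
\begin{align*}
I(\sigma_{X \cup Y}) &= I(\sigma_{X\cup Y}^{\mathscr{C} \setminus C_{k+1}}, G[\mathscr{C} \setminus C_{k+1}]) + I(\sigma_{X\cup Y}^{C_{k+1}}, G[C_{k+1}]) \\
&\quad - I(s_k, \sigma_{X\cup Y}^{C_k}, G[C_k]) - I(s_k, \sigma_{X\cup Y}^{C_{k+1}}, G[C_{k+1}]) \\
&\quad + I(s_k, \sigma_{X\cup Y}^{C_k \cup C_{k+1}}, G[C_k \cup C_{k+1}]),
\end{align*}
since the two full-subgraph terms count the imbalance of $s_k$ twice, once restricted to each side, and this over-counting is corrected by the three $s_k$-terms.

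From here the induction closes mechanically. I apply the induction hypothesis to $I(\sigma_{X \cup Y}^{\mathscr{C} \setminus C_{k+1}}, G[\mathscr{C} \setminus C_{k+1}])$, which yields the target formula indexed up to $k$. I apply \cref{thm1} to bound $I(\sigma_{X \cup Y}^{C_{k+1}}, G[C_{k+1}])$ from below by $|X_{k+1}|\cdot|Y_{k+1}| + (|X_{k+1}| \bmod 2)(|Y_{k+1}| \bmod 2)$, supplying the missing $i = k+1$ summand. Finally, I invoke \cref{rpigimbaux} (stated there for $s_{n-1}$, but the argument applies unchanged to any overlapping vertex between adjacent MCB-components) to bound the three $s_k$-terms from below by $|g(s_k,C_k) - g(s_k,C_{k+1})| - g(s_k,C_k) - g(s_k,C_{k+1})$. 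This is exactly the new $i = k$ contribution to the two $g$-sums, extending the index range from $1,\dots,k-1$ to $1,\dots,k$ and completing the step.

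The main obstacle is justifying the $s_k$-correction inequality rather than an equality: the imbalance of $s_k$ in the combined graph depends on how its neighbors in $C_k$ and those in $C_{k+1}$ are distributed to the left and right of $s_k$ in $\sigma_{X \cup Y}$, and cancellations can make the combined imbalance strictly smaller than the sum of the two restricted imbalances. The case split $(\pm\pm)$ inside \cref{rpigimbaux} on the signs of $\ell_1 - r_1$ and $\ell_2 - r_2$ is the technical heart of the argument; the same-sign cases yield $0$ on the left-hand side, and the opposite-sign cases are controlled via the identities $\ell_1 + r_1 = g(s_k, C_k)$ and $\ell_2 + r_2 = g(s_k, C_{k+1})$. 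Once this case analysis is in hand, the induction step composes into the desired global lower bound.
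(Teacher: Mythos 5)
Your proposal is correct and follows essentially the same route as the paper's own proof: the identical induction on $|\mathscr{C}|$, the same five-term decomposition of $I(\sigma_{X\cup Y})$ isolating $C_{k+1}$ and correcting the double-counted imbalance of $s_k$, and the same use of \cref{thm1}, the induction hypothesis, and \cref{rpigimbaux} to close the step. Your remark that \cref{rpigimbaux} is stated for $s_{n-1}$ but applies verbatim to any overlapping vertex is a fair and needed observation that the paper leaves implicit.
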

\begin{proof}
	We shall prove that the imbalance of any arbitrary ordering $\sigma_{X \cup Y}$ on the vertex set $X \cup Y$ is bounded from below by the above expression by induction on $|\mathscr{C}| = n$.
	\begin{itemize}[label=$\bullet$]
		\item Base Case $(n = 0)$:\\
		By the definition of MCB-components $\mathscr{C}$, the graph $G$ is an empty graph. Thus,
		\begin{align*}
		I(G) = 0 &= \sum\limits_{i=1}^{n}|X_i|\cdot|Y_i| + (|X_i| \mod 2) \cdot (|Y_i|\mod 2) \\
		&- \Bigg(\sum\limits_{i=1}^{n-1}g(s_i,C_i) + g(s_i,C_{i+1})\Bigg)+ \Bigg(\sum\limits_{i=1}^{n-1}|g(s_i,C_i) - g(s_i,C_{i+1})|\Bigg).
		\end{align*}
		
		\item Base case $(n = 1)$:\\
		By the definition of MCB-components $\mathscr{C}$, the graph $G$ is a complete bigraph. Thus, by \cref{thm1}
		\begin{align*}
		I(G) & =|X| \cdot |Y| + (|X| \mod 2) \cdot (|Y|\mod 2) \\
		&=\sum\limits_{i=1}^{n}|X_i|\cdot|Y_i| + (|X_i| \mod 2) \cdot (|Y_i|\mod 2) \\
		&-\Bigg(\sum\limits_{i=1}^{n-1}g(s_i,C_i) + g(s_i,C_{i+1})\Bigg)+ \Bigg(\sum\limits_{i=1}^{n-1}|g(s_i,C_i) - g(s_i,C_{i+1})|\Bigg).
		\end{align*}
		
		\item Induction hypothesis:\\
		For any chained complete bigraph $G = (X,Y,E)$ with corresponding MCB-component family $\mathscr{C} = \{C_1, \dots, C_k\}$, where $0 \leq k < n$, it holds that
		\begin{align*}
		I(G) & \geq \sum\limits_{i=1}^{k}|X_i|\cdot|Y_i| + (|X_i| \mod 2) \cdot (|Y_i|\mod 2) \\
		&- \Bigg(\sum\limits_{i=1}^{k-1}g(s_i,C_i) + g(s_i,C_{i+1})\Bigg)+ \Bigg(\sum\limits_{i=1}^{k-1}|g(s_i,C_i) - g(s_i,C_{i+1})|\Bigg).
		\end{align*}
		
		\item Induction step $(n>1)$:\\
		Let $G = (X,Y,E)$ be a chained complete bigraph with corresponding MCB-component family $\mathscr{C} = \{C_1, \dots, C_{k+1}\}$. Let us define $\mathscr{C} \setminus C_{k+1}$ as follows:
		$$\mathscr{C} \setminus C_{k+1} = \bigcup\limits_{C_i \in \mathscr{C}\setminus \{C_{k+1}\}}C_i.$$ 
		
		\begin{figure}[H]
			\centering
			\begin{tikzpicture}[-,semithick]
\tikzset{XS/.append style={fill=red!15,draw=black,text=black,shape=circle}}
\node[XS]         (x2) at (1,1) {$s_1$};

\tikzset{YS/.append style={fill=blue!15,draw=black,text=black,shape=circle}}
\node[YS]         (y4) at (4,0) {$s_2$};

\tikzset{Y/.append style={fill=blue!50,draw=black,text=black,shape=circle}}
\node[Y]         (y1) at (-2,0) {$y_1$};
\node[Y]         (y2) at (0,0) {$y_2$};
\node[Y]         (y3) at (2,0) {$y_3$};
\node[Y]         (y5) at (6,0) {$y_5$};
\node[Y]         (y6) at (8,0) {$y_6$};

\tikzset{X/.append style={fill=red!50,draw=black,text=black,shape=circle}}
\node[X]         (x1) at (-1,1) {$x_1$};
\node[X]         (x3) at (3,1) {$x_3$};
\node[X]         (x4) at (5,1) {$x_4$};
\node[X]         (x5) at (7,1) {$x_5$};

\tikzset{t/.append style={fill=white,draw=white,text=black}}
\node[t]         at (1,-1) {$\mathscr{C} \setminus C_{k+1}$};
\node[t]         at (6,-1) {$C_{k+1}$};

\path (x1) edge              node {} (y1)
edge              node {} (y2)
(x2) edge              node {} (y1)
edge              node {} (y2)
edge              node {} (y3)
edge              node {} (y4)
(x3) edge              node {} (y3)
edge              node {} (y4)
(x4) edge              node {} (y5)
edge              node {} (y6)
edge              node {} (y4)
(x5) edge              node {} (y5)
edge              node {} (y6)
edge              node {} (y4);

\begin{scope}[on background layer]     
\fill[black,opacity=0.3] \convexpath{x1,x2,x3,y4,y3,y2,y1}{1.5em}; 
\fill[black,opacity=0.3] \convexpath{x4,x5,y6,y5,y4}{1.5em};
\end{scope}
\end{tikzpicture}
			\caption{Illustration of the definitions of $\mathscr{C} \setminus C_{k+1}$ and $C_{k+1}$.}
		\end{figure}
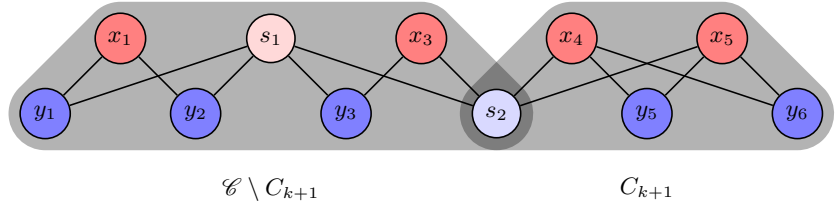
		Using the above definition, we write the imbalance of $\sigma_{X \cup Y}$ as follows:
		\begin{align}
		I(\sigma_{X \cup Y}) & = I(\sigma_{X\cup Y}^{\mathscr{C} \setminus C_{k+1}}, G[\mathscr{C} \setminus C_{k+1}]) + I(\sigma_{X\cup Y}^{C_{k+1}}, G[C_{k+1}]) \notag\\
		&- I(s_{k}, \sigma_{X \cup Y}^{C_{k}}, G[C_{k}]) - I(s_{k}, \sigma_{X \cup Y}^{C_{k+1}}, G[C_{k+1}]) \notag\\ 
		& + I(s_{k}, \sigma_{X \cup Y}^{C_{k} \cup C_{k+1}}, G[C_{k} \cup C_{k+1}]) \label{ap-eqlem7-0}\\
		&\geq I(G[\mathscr{C} \setminus C_{k+1}]) + I( G[C_{k+1}]) - I(s_{k}, \sigma_{X \cup Y}^{C_{k}}, G[C_{k}]) \notag\\
		& - I(s_{k}, \sigma_{X \cup Y}^{C_{k+1}}, G[C_{k+1}]) + I(s_{k}, \sigma_{X \cup Y}^{C_{k} \cup C_{k+1}}, G[C_{k} \cup C_{k+1}]) \notag\\
		&\geq\Bigg(\sum\limits_{i=1}^{k}|X_i|\cdot|Y_i| + (|X_i| \mod 2) \cdot (|Y_i|\mod 2)\Bigg)\notag\\
		&- \Bigg(\sum\limits_{i=1}^{k-1}g(s_i,C_i) + g(s_i,C_{i+1})\Bigg) + \Bigg(\sum\limits_{i=1}^{k-1}|g(s_i,C_i) - g(s_i,C_{i+1})|\Bigg) \notag\\
		&+ |X_{k+1}|\cdot|Y_{k+1}| + (|X_{k+1}| \mod 2) \cdot (|Y_{k+1}|\mod 2) \notag\\
		&- I(s_{k}, \sigma_{X \cup Y}^{C_{k}}, G[C_{k}])  - I(s_{k}, \sigma_{X \cup Y}^{C_{k+1}}, G[C_{k+1}]) \notag\\
		&+ I(s_{k}, \sigma_{X \cup Y}^{C_{k} \cup C_{k+1}}, G[C_{k} \cup C_{k+1}]) \label{ap-eqlem7-1}\\
		&=\Bigg(\sum\limits_{i=1}^{k+1}|X_i|\cdot|Y_i| + (|X_i| \mod 2) \cdot (|Y_i|\mod 2)\Bigg) \notag\\
		&- \Bigg(\sum\limits_{i=1}^{k-1}g(s_i,C_i) + g(s_i,C_{i+1})\Bigg) + \Bigg(\sum\limits_{i=1}^{k-1}|g(s_i,C_i) - g(s_i,C_{i+1})|\Bigg) \notag\\
		&- I(s_{k}, \sigma_{X \cup Y}^{C_{k}}, G[C_{k}]) - I(s_{k}, \sigma_{X \cup Y}^{C_{k+1}}, G[C_{k+1}]) \notag\\
		&+ I(s_{k}, \sigma_{X \cup Y}^{C_{k} \cup C_{k+1}}, G[C_{k} \cup C_{k+1}])\notag\\
		& \geq\Bigg(\sum\limits_{i=1}^{k+1}|X_i|\cdot|Y_i| + (|X_i| \mod 2) \cdot (|Y_i|\mod 2)\Bigg) \notag\\
		& - \Bigg(\sum\limits_{i=1}^{k-1}g(s_i,C_i) + g(s_i,C_{i+1})\Bigg) + \Bigg(\sum\limits_{i=1}^{k-1}|g(s_i,C_i) - g(s_i,C_{i+1})|\Bigg)\notag\\
		&- g(s_{k}, C_{k+1}) - g(s_{k}, C_{k}) + |g(s_{k}, C_{k}) - g(s_{k}, C_{k+1})| \label{ap-eqlem7-2}\\
		&= \Bigg(\sum\limits_{i=1}^{k+1}|X_i|\cdot|Y_i| + (|X_i| \mod 2) \cdot (|Y_i|\mod 2)\Bigg)\notag\\
		&- \Bigg(\sum\limits_{i=1}^{k}g(s_i,C_i) + g(s_i,C_{i+1})\Bigg) + \Bigg(\sum\limits_{i=1}^{k}|g(s_i,C_i) - g(s_i,C_{i+1})|\Bigg), \notag
		\end{align}
		where \cref{ap-eqlem7-0} follows from \cref{ap-ordanalysisrem3} and \cref{ap-ordanalysisrem4}, \cref{ap-eqlem7-1} follows from the induction hypothesis, and \cref{ap-eqlem7-2} follows from \cref{rpigimbaux}. 
		
		Similar to the analysis in \cref{ap-analysisconstrordsec}, in the expression at \cref{ap-eqlem7-0}, the first term twice adds an ``incorrect imbalance'' of vertex $s_{k}$. The ``incorrect imbalance'' is removed by the second and third term of the expression. The fourth term of the expression adds the correct imbalance of $s_{k}$.
	\end{itemize}
\end{proof}

\end{document}